\numberwithin{equation}{section}
\newtheorem{theorem}{Theorem}[section]
\newtheorem{example}[theorem]{Example}
\newtheorem{remark}[theorem]{Remark}
\newtheorem{lemma}[theorem]{Lemma}
\newtheorem{proposition}[theorem]{Proposition}
\theoremstyle{nonumberplain}
\newtheorem{proof}{Proof}
\renewcommand{\epsilon}{\ensuremath\varepsilon}
\renewcommand{\phi}{\ensuremath{\varphi}}
\DeclareMathAlphabet{\mathpzc}{OT1}{pzc}{m}{it}
\newcommand*\Laplace{\mathop{}\!\mathbin\bigtriangleup}
\newcommand{\NORM}[1]{\left\lVert#1\right\rVert} 
\newcommand{\DEF}{\coloneqq}
\newcommand{\RR}{\mathbb{R}}
\newcommand{\CC}{\mathbb{C}}
\newcommand{\NN}{\mathbb{N}}
\newcommand{\Om}{\Omega}
\newcommand{\del}{\partial}
\newcommand{\MID}{\!\! \mid\!}
\newcommand{\eps}{\epsilon}
\newcommand{\OO}{\mathcal{O}}
\newcommand{\intd}{\mathrm{d}}
\newcommand{\inteps}{\int\limits_{-\eps}^\eps}
\newcommand{\TransT}{\mathrm{T}}
\newcommand{\Leu}{\mathrm{L}}
\title{{Optimization of Steklov-Neumann eigenvalues}}
\date{}
\author{ Habib Ammari\thanks{\footnotesize Department of Mathematics, ETH Z\"urich, R\"amistrasse 101, CH-8092 Z\"urich, Switzerland (habib.ammari@math.ethz.ch, kthim.imeri@sam.math.ethz.ch).} 
\and Kthim Imeri\footnotemark[1] 
\and Nilima Nigam \thanks{\footnotesize Department of Mathematics, Simon Fraser University, 8888 University Dr, Burnaby, BC V5A 1S6, Canada (nigam@math.sfu.ca).} }
\begin{document}
	\maketitle

\begin{abstract}
This paper examines the Laplace equation with mixed boundary conditions, the Neumann and Steklov boundary conditions. This models a container with holes in it, like a pond filled with water but partly covered by immovable pieces on the surface. The main objective is to determine the right extent of the covering pieces, so that any shock inside the container yields a resonance. To this end, an algorithm is developed which uses asymptotic formulas concerning perturbations of the partitioning of the boundary pieces. Proofs for these formulas are established. Furthermore, this paper displays some results concerning bounds and examples with regards to the governing problem.
\end{abstract}

\def\keywords2{\vspace{.5em}{\textbf{  Mathematics Subject Classification
(MSC2000).}~\,\relax}}
\def\endkeywords2{\par}
\keywords2{{35R30, 35C20.}}

\def\keywords{\vspace{.5em}{\textbf{ Keywords.}~\,\relax}}
\def\endkeywords{\par}
\keywords{{Steklov eigenvalue problem, boundary integral operators, mixed boundary conditions, perturbations formulas.}}


\newcommand{\xS}{{x_{\textrm{S}}}}
\newcommand{\GSte}{{\Gamma_{\textrm{S}}}}
\newcommand{\lam}{\lambda}
\newcommand{\SxS}{\mathrm{S}_{\xS}^\lam}
\newcommand{\lgste}{\lambda^{\GSte}}
\newcommand{\GDir}{{\Gamma_{\textrm{D}}}}
\newcommand{\GNeu}{{\Gamma_{\textrm{N}}}}
\newcommand{\GDel}{{\Gamma_{\Delta}}}

\section{Introduction}\label{Ch:Introduction}
In this work we are concerned with resonances in a simple model of the  {\it sloshing problem}, where small vertical fluid displacements are modeled by harmonic functions in a domain satisfying mixed Steklov and Neumann boundary data. As an example, we may ask at what frequency a partially-covered cup of coffee  may spill.  To be more precise, under which harmonic movement frequency of a liquid-filled container do we achieve a high upsurge \cite{IgNobelPrice}? Or let us consider a pond with water and with immovable pieces of ice on top. Given a shock inside the pond, we observe that for the right volume of the pond and the right extent of the ice-pieces we achieve a form of resonance \cite{Nazarov2015, ice-fishing}. 

\begin{wrapfigure}{r}{0.5\textwidth}
  \centering
  \includegraphics[width=0.5\textwidth]{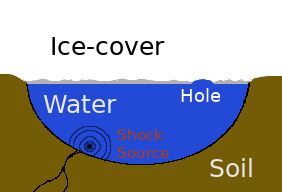}
  \caption{This is a schematic of the pond filled with water, which is encompased by soil and a cover of ice. The ice-cover has hole in it, from which water emerges after a shock inside the pond.}
  \label{fig:pond}
\end{wrapfigure}

These sort of problems are referred to as \emph{sloshing problems} and we model those problems as follows. We describe the liquid by a function $\mathrm{S}:\Om\rightarrow\RR$, where $\Om$ is a bounded domain in $\RR^2$ with boundary $\del\Om$ and $\del\Om$ expresses the container enclosing the liquid. We model a point-source forcing inside the liquid by the Dirac delta measure $\delta_\xS$ at the source location $\xS\in\Om$.  The boundary $\del\Om$ is partitioned into two parts: $\GSte$ which models a part in the container boundary which can interact with the outside media, and $\GNeu$, which models an impermeable part on the boundary. The displacement $\mathrm{S}$ satisfies the partial differential equation below, and is described in more detail in the discussion leading up to (\ref{pde:SteklovFunda}). 
 
Mathematically, the function $\mathrm{S}$ is the {\it Green's function} for the domain with the given boundary data.
As it turns out, for a countable amount of values of $\lam$ the function $\mathrm{S}$ is not defined. At these exceptional values $\mathrm{S}$ blows up. This is reminiscent of the Green's functions for mixed Dirichlet-Neumann problems \cite{HRPaperPart1, HRPaperPart2}. Our goal is to find a partition of $\del\Om$ such that a given target value is one of these exceptional values and thus will lead to a blow-up of $\mathrm{S}$. To this end, no analytic method is known and we develop a numerical approach with help of asymptotic formulas.  

Two special cases in the partitioning are discussed extensively in the literature. In the case of  Laplace's equation with pure Neumann boundary condition (when the partition where $\GNeu$ is equal to the boundary), we refer to \cite{MCMP, LPTSA}. The other case is pure Steklov problem, with  $\GSte$ being the boundary \cite{Stekloff, legacyStekloff}. The Steklov problem yields also exceptional values, so-called Steklov eigenvalues, and these eigenvalues are discussed intensively in the literature. Bounds to these are discussed in \cite{WentzellLaplace, BoundsQuasiconformalMaps, SteklovNeumannComparison, SpectralGeometry, EigenvaluesWithCurvature}, asymptotic formulas to domain perturbations can be found in \cite{ExistenceSteklovWithHole, Nazarov2015} and for boundary perturbations in \cite{Nazarov2015, AsympForBndryPerturb}. More recently, there has been active work on the asymptotics of Steklov eigenfunctions in the presence of corners, and near Steklov-Neuman junctions \cite{levitin}.

One objective is to optimize the shape of the boundary to achieve higher Steklov eigenvalues, and recent discoveries show that a rotation-symmetric, star-like domain with $k$ spikes maximizes the $k$-th Steklov eigenvalues amongst perturbations of the disc \cite{EldarsAlgo, ReformulatingSteklovProb}. We also want to mention a reformulation of the Laplace equation with Steklov boundary condition in \cite{LambertiLuigi} and existence results and bounds to the eigenvalues for the Laplace equation with mixed Steklov-Dirichlet boundary conditions in \cite{SteklovWithSmallDirichlet, DirichletWithSmallSteklov}. The Steklov-Neumann boundary condition, the main problem of interest in this paper, is also discussed in \cite{SteklovExistence, ice-fishing, SteklovNeumannComparison}.

The paper is organized as follows. In Section \ref{Ch:Prelim} we define the Steklov-Neumann problem and the associated eigenvalues. We give some examples and some bounds for the Steklov problem. These are relevant in our paper, since our algorithm in Section \ref{Ch:Algorithm} starts with the Steklov problem.

In Section \ref{Sec:EVALAsymp} we recall results about the existence of Steklov - Neumann eigenvalues and eigenfunctions. Then we prove that for small perturbation in the partitioning of the boundary $\del\Om$ the perturbed eigenvalues and eigenfunctions converge to the unperturbed eigenvalue and eigenfunctions, where we consider eigenvalues with multiplicity greater than one. This is motivated by recent work in which it has been shown that for smooth domains, the Steklov spectrum converges extremely fast to that of a disc of the same perimeter; the spectrum of the disc consists of eigenvalues of double multiplicity, except for the first eigenvalue (at zero); see \cite{SpectralGeometry} and the references therein. A slower asymptotic convergence is observed for polygonal domains. It is important, therefore, to account for the situations where higher-multiplicity Steklov eigenvalues are perturbed by inserting Neumann boundary segments.

 In Theorem \ref{thm:EVAL-Asymp} we  prove the convergence order and show explicit formulas for the highest order term in the asymptotic expression.

In Section \ref{Sec:STFct Asymp} we consider the Green's function for the Steklov - Neumann problem, that is the solution to (\ref{pde:SteklovFunda}).  Then we consider again the perturbation of the partitioning of the boundary $\del\Om$ and prove a resulting asymptotic formula for the Green's function.

In Section \ref{Ch:Algorithm} we introduce an algorithm for finding the partitioning of the boundary such that one of the Steklov-Neumann eigenvalues hits a given target value $\lam_\star\geq 0$. The actual computation of the eigenvalues is based on a boundary integral approach recently developed in \cite{nurbek}. We briefly discuss the algorithm to see where we can take advantage of the previously developed asymptotic formulas.

In Section \ref{Ch:NumImplTest} we describe how we implemented the calculation of the solution on a computer and then apply the algorithm to two domains with various numeric values to the underlying variables. 


\makeatletter
\def\moverlay{\mathpalette\mov@rlay}
\def\mov@rlay#1#2{\leavevmode\vtop{%
   \baselineskip\z@skip \lineskiplimit-\maxdimen
   \ialign{\hfil$\m@th#1##$\hfil\cr#2\crcr}}}
\newcommand{\charfusion}[3][\mathord]{
    #1{\ifx#1\mathop\vphantom{#2}\fi
        \mathpalette\mov@rlay{#2\cr#3}
      }
    \ifx#1\mathop\expandafter\displaylimits\fi}
\makeatother

\newcommand{\cupdot}{\charfusion[\mathbin]{\cup}{\cdot}}
\newcommand{\bigcupdot}{\charfusion[\mathop]{\bigcup}{\cdot}}
\newcommand{\OmBar}{\overline{\Om}}
\newcommand{\ZxS}{\mathrm{Z}_{\xS}^k}
\newcommand{\ZxSD}{\mathrm{Z}_{\mathrm{D}, \xS}^k}
\newcommand{\ZxSN}{\mathrm{Z}_{\mathrm{N}, \xS}^k}
\newcommand{\lgdir}{\lambda^{\GDir}}
\newcommand{\lgfull}{\lambda^{\del\Om}}
\newcommand{\lgempt}{\lambda^{\varnothing}}
\newcommand{\Hank}[1]{\mathrm{H}^{(1)}_{#1}}
\newcommand{\Sobo}{{H}}
\newcommand{\Sobominhalf}{{H}^{-\nicefrac{1}{2}}}
\newcommand{\Soboplushalf}{{H}^{\nicefrac{1}{2}}}
\newcommand{\Hminhalftilde}{\widetilde{H}^{-\nicefrac{1}{2}}}
\newcommand{\HminhalftildeNull}{\widetilde{H}^{-\nicefrac{1}{2}}_{\langle 0\rangle}}
\newcommand{\Hplushalfast}{{H}^{\nicefrac{1}{2}}_{\ast}}
\newcommand{\SGDk}{\mathcal{S}_\GDir^k}
\newcommand{\SGNk}{\mathcal{S}_\GNeu^k}
\newcommand{\KGNkstar}{(\mathcal{K}^{k}_\GNeu)^{\ast}}
\newcommand{\KGGkstar}{(\mathcal{K}^{k}_{\del\Om})^{\ast}}
\newcommand{\dSGDk}{\del\mathcal{S}^{k}_{\GDir}}
\newcommand{\Gk}{\Gamma^k}
\newcommand{\calAk}{\mathcal{A}^k}
\newcommand{\calAe}{\mathcal{A}_\eps}
\newcommand{\calBe}{\mathcal{B}_\eps}
\newcommand{\calAz}{\mathcal{A}_0}
\newcommand{\psiDN}{\begin{bmatrix} \psi\MID_{\GDir}\\ \psi\MID_{\GNeu}\end{bmatrix}}
\newcommand{\calAkz}{\mathcal{A}^k_0}
\newcommand{\calAke}{\mathcal{A}^k_\eps}
\newcommand{\kjzer}{k_j^0}
\newcommand{\kjeps}{k_j^\eps}

\section{Preliminaries}\label{Ch:Prelim}
Let $\Om\subset\RR^2$ be an open, simply connected, bounded domain with a smooth boundary. We define $\OmBar$ as the topological closure of $\Om$. We decompose the boundary $\del\Om\DEF\OmBar\setminus\Om$ into two parts, $\del\Om=\overline{\GSte \cupdot \GNeu}$, where $\GSte$ and $\GNeu$ are finite unions of open line segments. Then we define $(\GSte,\GNeu)$ to be a partition of $\del\Om$.

\begin{wrapfigure}{r}{0.42\textwidth}
  \centering
  \includegraphics[width=0.24\textwidth]{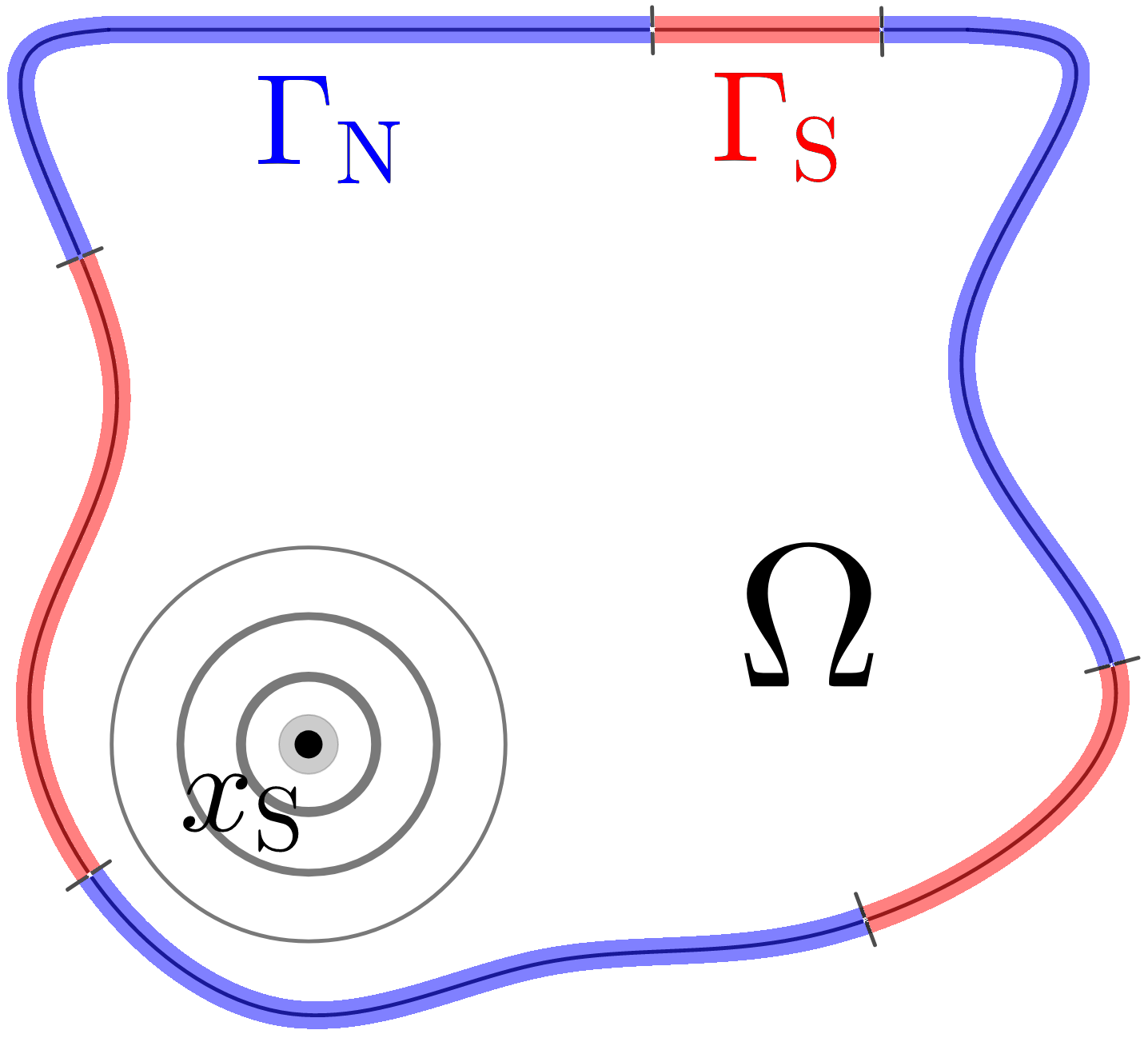}
  \caption{An $\Om$ with $\xS\in\Om$ and a partition $\del\Om=\overline{\GSte \cupdot \GNeu}$, where $\GSte$ is marked in red, $\GNeu$ in blue.}\label{fig:PreLimDomain}
  \vspace{-10pt}
\end{wrapfigure}

Let $\xS\in\Om$ and $\lam\in[0,\infty)$. Then we define the Steklov-Neumann function $\SxS: \Om\setminus\{\xS\}\rightarrow \RR$, also written as $\mathrm{S}^\lam(\xS\,, \cdot)$, as the Green's function to the mixed Steklov-Neumann problem, that is the fundamental Laplace equation with mixed, homogeneous Steklov and Neumann boundary conditions,
\begin{align}\label{pde:SteklovFunda}
	\left\{ 
	\begin{aligned}
		  \Laplace  \SxS(y)     &= \delta_0(\xS-y) \quad &&\text{for} \; &&y\in\Om\,, \\
		  \del_{\nu}\SxS(y) 	&= \lam\,\SxS(y) \quad &&\text{for} \; &&y\in\GSte \,,\\
		  \del_{\nu} \SxS(y) 	&= 0 \; &&\text{for} \; &&y\in\GNeu. \
	\end{aligned}
	\right.
\end{align}
Here $\nu_y$ denotes the outer normal at $y\in\del\Om$, $\del_{\nu_y}$ the normal derivative at $y\in\del\Om$ and $\delta_0$ the Dirac measure. In Theorem \ref{thm:decomp S} we 
prove existence of $\SxS\in\Leu^2(\Om)$ except for some exceptional values of $\lam$.

These exceptional values of $\lam$ are precisely the Steklov-Neumann eigenvalues given through
\begin{align}\label{pde:SteklovHom}
	\left\{ 
	\begin{aligned}
		 \Laplace      u		&= 0 \quad &&\text{in} \; &&\Om\,, \\
		 \del_{\nu}    u      	&= \lam \, u \quad &&\text{on} \; &&\GSte \,,\\
		 \del_{\nu}    u		&= 0 \quad &&\text{on} \; &&\GNeu \,.
	\end{aligned}
	\right.
\end{align}
Equation (\ref{pde:SteklovHom}) has a non-trivial solution $u\in \mathrm{H}^1(\Om)$ for a countable set of real values of $\lam$ \cite[Section 6 and 7]{SteklovExistence}, which we refer to as $\{\lgste_j\}_{j=1}^\infty$, so that $\lgste_1\leq\lgste_2\leq\lgste_3\leq\ldots\,$. We know that $\lgste_1 = 0$ is a simple eigenvalue and that $\lim_{j\rightarrow\infty}\lgste_j= +\infty$ for all partitions $(\GSte,\GNeu)$ of $\del\Om$.

We denote by  $\{\lgfull_j\}_{j\in\NN}$ the eigenvalues associated to the case $\GSte=\del\Omega$, i.e., the pure Steklov eigenvalues. 

For a smooth boundary, but not necessarily for non-smooth ones, and for simply connected domains, it is shown, for example in \cite{EdwardJulian}, that Steklov eigenvalues converge very fast to the spectrum of a disk of the same diameter, that is,
\begin{align*}
	\lambda_j^{\del\Omega} = \frac{2\pi}{|\del\Om|}j+\OO(j^{-\infty})\,,
\end{align*}
where the notation $\OO(j^{-\infty})$ means that the error term decays faster than any power of $j$. Moreover, $\lambda_{2j}^{\del\Omega} = \lambda_{2j+1}^{\del\Omega} +\OO(j^{-\infty})$, for $j\geq 1$. For more general smooth Riemann surfaces a similar result was established in \cite{SherPolterovich}. In the following we present some spectra for some explicit domains, starting with the unit circle.

\begin{example}\label{exp:disk}
	Let $\Om$ be the unit circle. We have that $\{\lgfull_j\}_{j=1}^\infty$ is equal to 
	$$
	\{ 0, 1, 1, 2, 2, 3, 3, 4, 4, \ldots\}\,,
	$$
	where all but the lowest element are twice degenerate and the corresponding non-normalized eigenfunctions are given in polar coordinates by \\$\{ r\,\sin(\lam\theta)\,, r\,\cos(\lam \theta)\}$. We refer to \cite{SpectralGeometry}. 
\end{example}

\begin{example}
	Let $\Om=(-1,1)\times(-1,1)$ be a square. The Steklov eigenvalues except for  0 are given in the following table:
	\begin{center}
	\begin{tabular}{ c l c }
 		$\text{Eigenspace basis}$ & $\text{Condition on $\alpha>0$}$ & $\text{Eigenvalue}$ \\ [0.5ex]
 		$\substack{\cos(\alpha\, x)\,\cosh(\alpha\, y)\\
 		\cos(\alpha\, y)\,\cosh(\alpha\, x)} $
 			& $\tan(\alpha)=-\tanh(\alpha)$ 
 			& $\alpha\,\tanh(\alpha)$ \\  
 		\hline
 		$\substack{\sin(\alpha\, x)\,\cosh(\alpha\, y)\\
 		\sin(\alpha\, y)\,\cosh(\alpha\, x)} $
 			& $\tan(\alpha)=\coth(\alpha)$ 
 			& $\alpha\,\tanh(\alpha)$ \\   
 		\hline
 		$\substack{\cos(\alpha\, x)\,\sinh(\alpha\, y)\\
 		\cos(\alpha\, y)\,\sinh(\alpha\, x)} $
 			& $\tan(\alpha)=-\coth(\alpha)$ 
 			& $\alpha\,\coth(\alpha)$ \\   
 		\hline 
 		$\substack{\sin(\alpha\, x)\,\sinh(\alpha\, y)\\
 		\sin(\alpha\, y)\,\sinh(\alpha\, x)} $
 			& $\tan(\alpha)=\tanh(\alpha)$ 
 			& $\alpha\,\coth(\alpha)$ \\ 
 		\hline 
 		$x\,y$ & $\text{none}$ & $1$ \\    
	\end{tabular}
	\end{center}
	Consider that these eigenvalues are not sorted and every eigenvalue on the very right column has multiplicity two, with their respective eigenspace basis given on the very left column, except for the eigenvalue $1$ and $0$, which are simple eigenvalues, with the eigenfunctions $f(x, y)=x\, y\,$ and $f(x, y)= 1$, respectively. Thus the sorted eigenvalues are given approximately by
	\begin{align*}
		\{
		  0,\; 0.938, 0.938,\; 1,\; 2.347, 2.347,\; 2.365, 2.365,\ldots
		\}\,.
	\end{align*}
	We refer to \cite[Section 3.1]{SpectralGeometry}.
\end{example}

\begin{example}
	Let $\Om_\eps=\{x\in\RR^2 \;|\; \eps<|x|<1\}$ be a ring with a hole of radius $\eps$. We note here that $\Om_\eps$ is not simply connected. The only radially independent eigenfunction is
	\begin{align*}
		f(r)=\bigg( \frac{-(1+\eps)}{\eps\,\log(\eps)}\bigg)\log(r)+1\,,
	\end{align*}
	with the corresponding eigenvalue
	\begin{align*}
		\lam^\eps = -\frac{1+\eps}{\eps}\log(\eps)\,.
	\end{align*}
	The other eigenvalues and eigenfunctions are given in \cite{ReformulatingSteklovProb}.
\end{example}

\begin{example}
	Let $\Om_{\eps,\, k}$ be a rotation-symmetric perturbation of the unit disk, which is the shape of a flower and which can be described through polar coordinates as $\{(r,\theta)\in\RR_{\geq 0}\times[-\pi,\pi) \;|\; r\leq 1+\eps\,\cos(k\,\theta)\}$. Let $\{\lam^0_j\}_{j=1}^\infty$ be the Steklov eigenvalues for the disk, see Example \ref{exp:disk}, and let $\{\lam^\eps_j\}_{j=1}^\infty$ be the Steklov eigenvalues for $\Om_{\eps,\, k}$. Then under some regularity assumptions \cite{AsympForBndryPerturb}, it can be shown that for $k=0$ we have
	\begin{align*}
		\lam_j^\eps = (1+\eps)^{-1}\lam_j^0\,,\quad\text{for all } j\in\NN_0\,,
	\end{align*}
	and for $k>0$ odd, or $k\neq 2\,\lam_j^0$, we have that
		$\lam_j^\eps-\lam^0_j = \OO(\eps^2)\,,$	 
	and for the remaining cases, that is for $\lam_k^\eps$ and $\lam_{k+1}^\eps$, where $\lam_k^0=\lam_{k+1}^0=j/2$, $k$ even, we have that
	\begin{align*}
		\lam_{k}^\eps-\lam_k^0 			&= \,-\eps\,\frac{j(j+1)}{4}+\OO(\eps^2) \,,\\
		\lam_{k+1}^\eps-\lam_{k+1}^0 	&= \,\eps\,\frac{j(j+1)}{4}+\OO(\eps^2) \,.
	\end{align*}
	The paper \cite{AsympForBndryPerturb} gives explicit formulas for the second order terms.
\end{example}

As for bounds to Steklov-Neumann eigenvalues, we have from \cite[Theorem 1.8]{SteklovNeumannComparison} that for all $j\in\NN$,
\begin{align*}
	\lam_j^{\GSte}\leq \frac{2\pi\,(j-1)}{|\GSte|},
\end{align*}
where $|\GSte|$ denotes the length of $\GSte$.
\cite{HPSInequ} showed that  in fact one has a strict inequality for $\lam^\GSte_3$, that is,
\begin{align*}
	\lam^\GSte_3 < 	  \frac{4\pi}{|\GSte|}\,.
\end{align*}

\section{Eigenvalue Asymptotics}\label{Sec:EVALAsymp}
In this section we derive an asymptotic formula which describes the behaviour of the spectrum when we change a small part $\GDel$ of the  Steklov boundary to  Neumann. 

To this end, we first need to prove convergence.

Let $\{ u_j^0 \}_{j\in\NN}$ be Steklov-Neumann eigenfunctions corresponding to the Steklov-Neumann eigenvalues $\{\lam_j^0\}_{j\in\NN}\DEF\{\lgste_j\}_{j\in\NN}$ to the partition $(\GSte,$ $\GNeu)$, where we assume $\GSte$ is non-empty and open, such that all eigenfunctions are mutually orthogonal to each other within the $\Leu^2(\GSte)$ inner product. 
 In the following, the eigenfunctions will be determined uniquely, up to the factor $\pm 1$,  after imposing the normalization condition $\|u_j^0\|_{\Leu^2(\GSte)}=1$.
%

We define $\GDel\subset\del\Om$ as a small boundary interval of length $2\eps$, such that $\GDel\subset\GSte$, and $\GSte\setminus\overline{\GDel}$ is open and non-empty. We define $\{ u_j^\eps \}_{j\in\NN}$ to be the eigenfunctions to the Steklov-Neumann eigenvalues $\{\lam_j^\eps\}_{j\in\NN}$ corresponding to the partition $(\GSte\setminus\overline{\GDel}, \GNeu\cup\GDel)$.

\begin{lemma}\label{lemma:lam+g}
	Let $\GSte, \GNeu, \GDel, \{\lam_j^0\}_{j\in\NN}$ be described as above and let $\eps>0$ be small enough. Let $g\in\Leu^2(\GSte)$, let $Q\subset\CC$ be compact in $\CC$, such that $\lam_j^0\not\in Q$ for all $j\in\NN$. Then the following assertions hold.
	\begin{itemize}
	\item[(i)]	There exists $\eps_0>0$ such that for all $\eps<\eps_0$ and $\lam\in Q$
 there exists a unique solution $U^\eps$ to the boundary value problem
 		\begin{align}\label{pde:lemma:Ueps}
		\left\{ 
		\begin{aligned}
			\Laplace      U^\eps	&= 0 \quad &&\text{in} \; &&\Om\,, \\
		 	\del_{\nu}    U^\eps      	&= \lam \,  U^\eps    + g \quad &&\text{on} \; &&\GSte\setminus\GDel \,,\\
		 	\del_{\nu}    U^\eps		&= 0 \quad &&\text{on} \; &&\GNeu\cup\GDel \,,
		\end{aligned}
		\right.
		\end{align}
		and, for a constant $C$ independent of $\eps$,  it holds that
		\begin{align}\label{inequ:lemma:Ueps}
			\NORM{U^\eps}_{\mathrm{H}^1(\Om)}\leq C\,\NORM{g}_{\Leu^2(\GSte)}\,.
		\end{align}
 	\item[(ii)]	Let $U^0$ be the solution to (\ref{pde:lemma:Ueps}) with $\GDel=\varnothing$, then we have that
 		\begin{align}\label{convergence:lemma:Ueps-U0}
 			\NORM{U^\eps-U^0}_{\mathrm{H}^1(\Om)}\xrightarrow{\eps\rightarrow 0}0\,.
 		\end{align}
	\end{itemize}
\end{lemma}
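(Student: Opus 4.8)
The plan is to recast the boundary value problem (\ref{pde:lemma:Ueps}) as a second-kind operator equation in $\mathrm{H}^1(\Om)$ in which the perturbation $\GDel$ enters only through integration over a shrinking boundary piece. I would introduce the symmetric, $\mathrm{H}^1(\Om)$-coercive sesquilinear form
\begin{align*}
	b(U,v) = \int_\Om \nabla U\cdot\overline{\nabla v}\,\intd x + \int_{\del\Om} U\,\overline{v}\,\intd s,
\end{align*}
which is an equivalent inner product on $\mathrm{H}^1(\Om)$ (the boundary term supplies the missing $\Leu^2(\Om)$-control; this is a standard equivalent-norm result on smooth bounded domains). The weak form of (\ref{pde:lemma:Ueps}) then reads $b(U^\eps,v)=\int_{\del\Om}U^\eps\overline v+\lam\int_{\GSte\setminus\GDel}U^\eps\overline v+\int_{\GSte\setminus\GDel}g\,\overline v$ for all $v\in\mathrm{H}^1(\Om)$. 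By Riesz representation I define compact operators $T_\eps\in\mathcal{L}(\mathrm{H}^1(\Om))$ and a datum $F_\eps\in\mathrm{H}^1(\Om)$ via $b(T_\eps U,v)=\int_{\del\Om}U\overline v+\lam\int_{\GSte\setminus\GDel}U\overline v$ and $b(F_\eps,v)=\int_{\GSte\setminus\GDel}g\,\overline v$; compactness of $T_\eps$ comes from the compact trace embedding $\mathrm{H}^1(\Om)\hookrightarrow\mathrm{H}^{1/2}(\del\Om)\hookrightarrow\Leu^2(\del\Om)$. The problem becomes $(I-T_\eps)U^\eps=F_\eps$, and its homogeneous version $(I-T_\eps)U=0$ is exactly the perturbed eigenvalue problem (\ref{pde:SteklovHom}) for the partition $(\GSte\setminus\overline{\GDel},\GNeu\cup\GDel)$; hence $I-T_\eps$ is invertible precisely when $\lam$ is not a perturbed eigenvalue.

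The crux is to upgrade the pointwise effect of deleting $\GDel$ into operator-norm convergence $T_\eps\to T_0$, uniformly for $\lam\in Q$. From the definitions, $b\big((T_\eps-T_0)U,v\big)=-\lam\int_{\GDel}U\overline v$, whence $\NORM{(T_\eps-T_0)U}_{\mathrm{H}^1(\Om)}\lesssim|\lam|\,\NORM{U}_{\Leu^2(\GDel)}$, so everything reduces to the key estimate $\sup_{\NORM{U}_{\mathrm{H}^1(\Om)}\le 1}\NORM{U}_{\Leu^2(\GDel)}\xrightarrow{\eps\to0}0$. I would prove this by noting that the traces of the $\mathrm{H}^1(\Om)$-unit ball form a precompact family in $\Leu^2(\del\Om)$ (compact trace embedding), that precompact families in $\Leu^2$ are uniformly integrable, and that $|\GDel|=2\eps\to0$. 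This is the step I expect to be the main obstacle: mere strong convergence of $T_\eps$ would be useless for controlling inverses, and it is exactly the uniform-integrability upgrade that turns a seemingly singular boundary-condition perturbation into a norm-small one.

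Granting norm convergence, the remainder is soft. For the unperturbed problem $\lam\mapsto T_0(\lam)=A+\lam B$ is affine with $A,B$ compact, and $I-T_0(\lam)$ is invertible for every $\lam\in Q$ since $\lam\ne\lam_j^0$; by continuity of inversion and compactness of $Q$ this yields the uniform bound $\sup_{\lam\in Q}\NORM{(I-T_0(\lam))^{-1}}\FED M<\infty$. Choosing $\eps_0$ so that $\NORM{T_\eps-T_0}<1/(2M)$ for $\eps<\eps_0$ (uniformly in $\lam\in Q$, using the key estimate and boundedness of $|\lam|$ on $Q$), a Neumann-series argument gives invertibility of $I-T_\eps$ with $\NORM{(I-T_\eps)^{-1}}\le 2M$. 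This simultaneously shows that no perturbed eigenvalue meets $Q$ for $\eps<\eps_0$, and, since $\NORM{F_\eps}_{\mathrm{H}^1(\Om)}\lesssim\NORM{g}_{\Leu^2(\GSte)}$, delivers existence, uniqueness, and the uniform bound (\ref{inequ:lemma:Ueps}), proving (i).

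For (ii) I would split $U^\eps-U^0=(I-T_\eps)^{-1}(F_\eps-F_0)+\big[(I-T_\eps)^{-1}-(I-T_0)^{-1}\big]F_0$ and use the resolvent identity $(I-T_\eps)^{-1}-(I-T_0)^{-1}=(I-T_\eps)^{-1}(T_\eps-T_0)(I-T_0)^{-1}$. The first term is bounded by $2M\,\NORM{F_\eps-F_0}_{\mathrm{H}^1(\Om)}\lesssim\NORM{g}_{\Leu^2(\GDel)}\to0$, and the second by $2M^2\,\NORM{T_\eps-T_0}\,\NORM{F_0}_{\mathrm{H}^1(\Om)}\to0$; both vanish by the key estimate, yielding (\ref{convergence:lemma:Ueps-U0}).
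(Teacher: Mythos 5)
Your proof is correct, but it takes a genuinely different route from the paper. The paper proves (i) by invoking the Fredholm alternative and establishing the a priori bound (\ref{inequ:lemma:Ueps}) by contradiction: it normalizes $\NORM{U^{\eps_k}}_{\Leu^2(\GSte)}=1$, extracts via Rellich--Kondrachov and Banach--Alaoglu a subsequence with $\lam_k\to\lam^\star\in Q$ and $U^{\eps_k}\to U^\star\neq 0$ strongly in $\Leu^2(\Om)$, weakly in $\mathrm{H}^1(\Om)$, passes to the limit in the variational formulation, and concludes that $\lam^\star\in Q$ would be an eigenvalue; its item (ii) then needs a separate subsequence-plus-arbitrariness step and a somewhat delicate upgrade from weak to strong $\mathrm{H}^1$ convergence, again by contradiction. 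You instead make the perturbation quantitative: the reformulation $(I-T_\eps(\lam))U^\eps=F_\eps$, your key estimate $\sup_{\NORM{U}_{\mathrm{H}^1(\Om)}\leq 1}\NORM{U}_{\Leu^2(\GDel)}\to 0$ (which is sound --- the trace image of the unit ball is precompact in $\Leu^2(\del\Om)$, precompact sets have uniformly absolutely continuous squared integrals by a finite-net argument, and $|\GDel|=2\eps$), and a Neumann-series/resolvent-identity argument. Note that your key estimate is exactly the quantitative kernel of the same trace compactness the paper uses qualitatively through Rellich--Kondrachov. What your route buys: explicit uniformity in $\lam\in Q$, an explicit smallness criterion for $\eps_0$, the (here implicit, but later needed for Proposition \ref{prop:ueps conv to u0}) fact that the perturbed spectrum avoids $Q$ for $\eps<\eps_0$, full-sequence convergence in (ii) with no subsequence extraction or weak-to-strong upgrade, and even convergence rates if desired (e.g., via H\"older and the embedding $H^{1/2}(\del\Om)\hookrightarrow\Leu^p(\del\Om)$ for finite $p$, giving $\NORM{U}_{\Leu^2(\GDel)}\lesssim\eps^{1/2-1/p}\NORM{U}_{\mathrm{H}^1(\Om)}$, hence $\NORM{T_\eps-T_0}=\OO(\eps^{1/2-1/p})$). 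What the paper's route buys is economy of setup: no operator reformulation or Riesz representation machinery, just the weak formulation and standard compactness, at the cost of a non-quantitative, contradiction-based argument. One cosmetic remark: your aside that $I-T_\eps$ is invertible \emph{precisely} when $\lam$ is not a perturbed eigenvalue tacitly uses the Fredholm alternative for the compact $T_\eps$, but your actual existence proof bypasses it entirely via the Neumann series, so nothing is missing.
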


\begin{proof}
	Let us prove item \textit{(i)}. For (\ref{pde:lemma:Ueps}) the Fredholm alternative holds. Therefore, in order to prove the existence and uniqueness of a solution, it suffices to derive (\ref{inequ:lemma:Ueps}). We verify this estimate by contradiction. Assume that Estimate (\ref{inequ:lemma:Ueps}) fails, that is, there exists a sequence $\eps_k\rightarrow 0$ as $k\rightarrow\infty$, $g_k\in\Leu^2(\GSte)$, uniformly bounded with respect to the $\Leu^2(\GSte)$-norm, and $\lam_k\in Q$ such that the solution satisfies the reverse inequality,
	\begin{align}\label{lemma:proof:contra inequ}
		\NORM{U^{\eps_k}}_{\mathrm{H}^1(\Om)}>k\,\NORM{g_k}_{\Leu^2(\GSte)}\,.
	\end{align}
	Consider that when the pair $(U^\eps\,, g_k)$ satisfies (\ref{pde:lemma:Ueps}) then so does the pair $(\frac{U^\eps}{\NORM{U^\eps}_{\Leu^2(\GSte)}}\,, \frac{g_k}{\NORM{U^\eps}_{\Leu^2(\GSte)}})$ satisfy (\ref{pde:lemma:Ueps}), and for both cases the Estimate (\ref{inequ:lemma:Ueps}) has exactly the same form. Thus we redefine $U^\eps$ as $\frac{U^\eps}{\NORM{U^\eps}_{\Leu^2(\GSte)}}$ and $g_k$ as $\frac{g_k}{\NORM{U^\eps}_{\Leu^2(\GSte)}}$. We notice that $g_k$ is still uniformly bounded because $\NORM{U^\eps}_{\Leu^2(\GSte)}$ cannot converge to 0.
	
	
	
	From \cite[Theorem 6.1]{SteklovExistence} we have that
	\begin{align*}
		\NORM{U^\eps}_{\Leu^2(\Om)}^2
		\leq C_\alpha
			\left(
				\int_\Om |\nabla U^\eps|^2 \;\intd x + \int_{\GSte\setminus\GDel} |U^\eps|^2\;\intd \sigma
			\right),
	\end{align*}		
	for a constant  $C_\alpha>0$ independent of $\eps$.
	
	Then we see using Green's first identity, the Cauchy-Schwarz inequality and the trace theorem that
	\begin{align*}
		\NORM{U^\eps}_{\mathrm{H}^1(\Om)}^2 
			&\leq\tilde{C}_\alpha\left(
					|\lam+1|\,\NORM{U^\eps}^2_{\Leu^2(\GSte)}
					+ \NORM{g_k}_{\Leu^2(\GSte)}\NORM{U^\eps}_{\Leu^2(\GSte)}
				\right)\\
			&\leq \tilde{C}_\alpha\,C_{\mathrm{trace}}\left(
					C_Q\,\NORM{U^\eps}_{\Leu^2(\GSte)}\NORM{U^\eps}_{\mathrm{H}^1(\Om)}
					+ \NORM{g_k}_{\Leu^2(\GSte)}\NORM{U^\eps}_{\mathrm{H}^1(\Om)}
				\right)\,.
	\end{align*}
	From the redefinition we have that $\NORM{U^\eps}_{\Leu^2(\GSte)}=1$ and with the fact that $g_k$ is uniformly bounded, we infer that $\NORM{U^\eps}_{\mathrm{H}^1(\Om)}$ is bounded above, independently of $\eps$. With Inequality (\ref{lemma:proof:contra inequ}) we then have
	\begin{align*}
		\NORM{g_k}_{\Leu^2(\GSte)}\leq \frac{C}{k}\,, 
	\end{align*}
	for a constant  $C>0$ independent of $k$.
	Using the compactness of $Q$, the Rellich-Kondrachov theorem, and the Banach-Alaoglu theorem we conclude that there exists a subsequence to $(\eps_k)_{k\in\NN}$, which by abuse of notation we call again $(\eps_k)_{k\in\NN}$, such that
	\begin{align*}
		\lam_k\xrightarrow{k\rightarrow\infty}&\lam^\star\in Q\,,\\
		U^{\eps_k}\xrightarrow{k\rightarrow\infty}& U^\star\;\text{strongly in $\Leu^2(\Om)$ and weakly in $\mathrm{H}^1(\Om)$}\,.
	\end{align*}
	By the redefinition $\NORM{U^\eps}_{\Leu^2(\GSte)}=1$, we have that $U^\star\neq 0$.
	Consider that the weak formulation of  (\ref{pde:lemma:Ueps}) reads that for all $v\in C^\infty(\overline{\Om})$ we have
	\begin{align}\label{proof-lemmaUeps:itemI:varForm}
		\int_\Om \nabla U^\eps\cdot \overline{\nabla v} \;\intd x - \lam\,\int_{\GSte\setminus\GDel} U^\eps \, \overline{v}\;\intd\sigma 
			=\int_{\GSte\setminus\GDel} g \, \overline{v}\;\intd\sigma \,.
	\end{align}
	We consider this for the sequence $(\eps_k)_{k\in\NN}$, and then apply the limits for $k\rightarrow\infty$ on both sides. Using the dominated convergence theorem we can pull the limit inside and infer that
	\begin{align*}
		\int_\Om \nabla U^\star\cdot \overline{\nabla v} \;\intd x - \lam^\star\,\int_{\GSte} U^\star \, \overline{v}\;\intd\sigma 
			= 0\,,
	\end{align*}
	where we used that $\NORM{g_k}_{\Leu^2(\GSte)}\xrightarrow{k\rightarrow\infty}0$, thus $g_k\xrightarrow{k\rightarrow\infty}0$ in $\Leu^2$. 
	
	From this identity and the fact that $U^\star\neq 0$ we can deduce that $\lam^\star\in Q$ is an eigenvalue with eigenfunction $U^\star$ to  (\ref{pde:SteklovHom}). This contradicts the assumption that no eigenvalues are in $Q$. This contradiction lets us conclude that (\ref{inequ:lemma:Ueps}) holds.
	
	Let us prove item \textit{(ii)}. From the proof of item \textit{(i)} we find for every sequence a subsequence denoted $(\eps_k)_{k\in\NN}$ such that $U^{\eps_k}\xrightarrow{k\rightarrow\infty} U^\star$, where the convergence is strongly in $\Leu^2(\Om)$ and weakly in $\mathrm{H}^1(\Om)$. Passing to the limit for $k\rightarrow\infty$ in the variational formulation (\ref{proof-lemmaUeps:itemI:varForm}), we see that $U^\star$ satisfies the same variational formulation as $U^0$, hence $U^\star=U^0$. Due to the arbitrariness of the sequence, it holds for $\eps\rightarrow 0$. 
	
	Now we need to improve the convergence to strong convergence in $\mathrm{H}^1(\Om)$. Using the strong convergence in $\Leu^2(\Om)$, it is enough to show that
	\begin{align*}
		\lim_{\eps\rightarrow 0} \NORM{\nabla (U^\eps-U^0)}^2_{\Leu^2(\Om)}=0 \,.
	\end{align*}
	Using Green's first identity on $\int_\Om |\nabla(U^\eps-U^0)|^2$, as well as on the definition of weak convergence in $\mathrm{H}^1(\Om)$ (and on weak convergence in $\Leu^2(\Om)$), the last equation reads
	\begin{align*}
		\lim_{\eps\rightarrow 0} \;\lam \NORM{U^\eps-U^0}^2_{\Leu^2(\GSte\setminus\GDel)}=0\,.
	\end{align*}
	Assume for a contradiction that the the last identity is not true.  Denote $V^\eps\DEF U^\eps-U^0$. Then $\lim_{\eps \rightarrow 0}V^\eps \not=0$, and $V^\eps$ satisfies  (\ref{pde:SteklovHom}), hence $\lam$ is an eigenvalue, which by assumption it cannot be, establishing the contradiction. Lemma \ref{lemma:lam+g} is then proved.
\end{proof}

The proof of the following proposition is given in \cite[Theorem 1.2]{SteklovWithSmallDirichlet}, using Lemma \ref{lemma:lam+g}.
\begin{proposition}\label{prop:ueps conv to u0}
	Let $\GSte, \GNeu, \GDel, \{\lam_j^0\}_{j\in\NN},\{\lam_j^\eps\}_{j\in\NN},\{u_j^0\}_{j\in\NN},\{u_j^\eps\}_{j\in\NN}$ be described as above and let $\eps>0$ be small enough. Pick $j\in\NN$ and let $\lam_j^0$ have multiplicity $N_j$. Then the following assertions hold:
	\begin{itemize}
	\item[(i)]	The eigenvalue $\lam_j^0$ is the limit of $N_j$ eigenvalues (with multiplicity taken into account) for $\eps\rightarrow 0$ of the perturbed Steklov-Neumann problem (\ref{pde:SteklovHom}).
 	\item[(ii)]	If $\lam_j^\eps,\ldots,\lam_{j+N_j-1}^\eps$ are eigenvalues of the perturbed Steklov-Neumann problem (\ref{pde:SteklovHom}) converging to $\lam_j^0$ and $u_j^\eps,\ldots, u_{j+N}^\eps$ are the corresponding $\Leu^2(\GSte)$-orthonormalized eigenfunctions, then for any sequence $\eps_k\xrightarrow{k\rightarrow \infty}0$ there exists a subsequence $\eps_{k_m}\xrightarrow{m\rightarrow \infty}0$ such that
 	\begin{align*}
 		\NORM{u_{j+i}^{\eps_{k_m}}-u_{j+i}^0}_{\mathrm{H}^1(\Om)}\xrightarrow{m\rightarrow\infty}0\,,\; \text{ for all } i\in\{0,\ldots,N_j-1\}\,,
 	\end{align*}
 	where $u^0_{j},\ldots,u^0_{j+N-1}$ are $\Leu^2(\GSte)$-orthonormalized eigenfunctions associated with the eigenvalue $\lam_j^0$.
	\end{itemize}
\end{proposition}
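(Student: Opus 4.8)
The plan is to recast the perturbed problem as a spectral problem for a family of compact operators on the fixed Hilbert space $\Leu^2(\GSte)$ and to run a Riesz-projection (contour integral) argument, using Lemma~\ref{lemma:lam+g} as the engine for resolvent convergence. For $\lam$ in a compact set $Q\subset\CC$ disjoint from $\{\lam_i^0\}_{i\in\NN}$, define $\mathcal{R}^\eps_\lam\colon\Leu^2(\GSte)\to\Leu^2(\GSte)$ by $g\mapsto U^\eps|_{\GSte}$, where $U^\eps$ is the unique solution of (\ref{pde:lemma:Ueps}) provided by Lemma~\ref{lemma:lam+g}(i). This is precisely the resolvent $(\Lambda^\eps-\lam)^{-1}$ of the mixed Dirichlet-to-Neumann (Steklov) operator $\Lambda^\eps$ attached to the partition $(\GSte\setminus\overline{\GDel},\GNeu\cup\GDel)$, whose eigenvalues are the $\lam_j^\eps$. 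Since $U^\eps\in\mathrm{H}^1(\Om)$ has trace in $\mathrm{H}^{1/2}(\GSte)\hookrightarrow\hookrightarrow\Leu^2(\GSte)$, each $\mathcal{R}^\eps_\lam$ is compact; the estimate (\ref{inequ:lemma:Ueps}) and the trace theorem give $\NORM{\mathcal{R}^\eps_\lam}\leq C$ uniformly for $\eps<\eps_0$ and $\lam\in Q$, while analytic Fredholm theory renders $\lam\mapsto\mathcal{R}^\eps_\lam$ meromorphic with poles exactly at the $\lam_j^\eps$.

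The crux of the argument, and the step I expect to be the main obstacle, is upgrading the fixed-data convergence of Lemma~\ref{lemma:lam+g}(ii) to \emph{uniform operator-norm} convergence along a contour. Fix a circle $\gamma\subset\CC$ enclosing $\lam_j^0$ but no other $\lam_i^0$, and put $Q=\gamma$. I claim $\sup_{\lam\in\gamma}\NORM{\mathcal{R}^\eps_\lam-\mathcal{R}^0_\lam}\to0$ as $\eps\to0$. Suppose not: then there are $\eps_k\to0$, $\lam_k\in\gamma$, and $g_k$ with $\NORM{g_k}_{\Leu^2(\GSte)}=1$ such that $\NORM{(\mathcal{R}^{\eps_k}_{\lam_k}-\mathcal{R}^0_{\lam_k})g_k}_{\Leu^2(\GSte)}\geq\delta>0$. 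Writing $U^{\eps_k}$ and $\widetilde U^0$ for the solutions of (\ref{pde:lemma:Ueps}) with data $g_k$ on the perturbed and unperturbed partitions, both are bounded in $\mathrm{H}^1(\Om)$ by (\ref{inequ:lemma:Ueps}); passing to a subsequence, $\lam_k\to\lam^\star\in\gamma$, $g_k\rightharpoonup g^\star$ in $\Leu^2(\GSte)$, and both families converge weakly in $\mathrm{H}^1(\Om)$ and, by the compact trace embedding, strongly in $\Leu^2(\GSte)$. Passing to the limit in the weak formulation (\ref{proof-lemmaUeps:itemI:varForm}), where the contribution over $\GDel$ vanishes because $|\GDel|=2\eps\to0$, both weak limits solve the \emph{same} unperturbed problem with data $g^\star$ at $\lam^\star$; uniqueness in Lemma~\ref{lemma:lam+g}(i) forces them to coincide, and strong trace convergence then drives the difference of their traces to zero, contradicting the lower bound $\delta$.

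With uniform norm convergence on $\gamma$ established, introduce the Riesz projections $P_\eps\DEF\frac{1}{2\pi i}\oint_\gamma(\lam-\Lambda^\eps)^{-1}\,\intd\lam$ and $P_0\DEF\frac{1}{2\pi i}\oint_\gamma(\lam-\Lambda^0)^{-1}\,\intd\lam$; here $P_0$ projects onto the $\lam_j^0$-eigenspace, so $\mathrm{rank}\,P_0=N_j$, and $P_\eps$ projects onto the span of the perturbed eigenfunctions whose eigenvalues lie inside $\gamma$. Then $\NORM{P_\eps-P_0}\leq\frac{|\gamma|}{2\pi}\sup_{\lam\in\gamma}\NORM{\mathcal{R}^\eps_\lam-\mathcal{R}^0_\lam}\to0$, and since two projections with $\NORM{P_\eps-P_0}<1$ have the same rank, $\mathrm{rank}\,P_\eps=N_j$ for all small $\eps$. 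Hence exactly $N_j$ perturbed eigenvalues (counted with multiplicity) lie inside $\gamma$; as $\gamma$ may be chosen to enclose an arbitrarily small disc about $\lam_j^0$, these $N_j$ eigenvalues converge to $\lam_j^0$, which is assertion (i).

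For assertion (ii) I would argue directly by compactness on the eigenfunctions. Given any $\eps_k\to0$, Green's identity gives $\NORM{\nabla u_{j+i}^{\eps_k}}_{\Leu^2(\Om)}^2=\lam_{j+i}^{\eps_k}\NORM{u_{j+i}^{\eps_k}}_{\Leu^2(\GSte\setminus\GDel)}^2=\lam_{j+i}^{\eps_k}$, and the coercivity bound \cite[Theorem 6.1]{SteklovExistence} used in Lemma~\ref{lemma:lam+g} makes the $u_{j+i}^{\eps_k}$ bounded in $\mathrm{H}^1(\Om)$. Passing to a subsequence $\eps_{k_m}$, each $u_{j+i}^{\eps_{k_m}}\rightharpoonup u_{j+i}^0$ weakly in $\mathrm{H}^1(\Om)$ and strongly in $\Leu^2(\Om)$ and on the trace; the limits are harmonic, satisfy the unperturbed Steklov conditions at $\lam_j^0$, and inherit $\Leu^2(\GSte)$-orthonormality, so they form an orthonormal basis of the $N_j$-dimensional eigenspace. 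Strong $\mathrm{H}^1(\Om)$ convergence then follows from weak convergence together with convergence of the gradient norms, since $\NORM{\nabla u_{j+i}^{\eps_{k_m}}}_{\Leu^2(\Om)}^2=\lam_{j+i}^{\eps_{k_m}}\to\lam_j^0=\NORM{\nabla u_{j+i}^0}_{\Leu^2(\Om)}^2$ by (i) and $|\GDel|\to0$. The passage to a subsequence and the a posteriori choice of the limiting basis $\{u_{j+i}^0\}$ are genuinely necessary: within a cluster of multiplicity $N_j$ the individual perturbed eigenfunctions are determined only up to rotations of the eigenspace.
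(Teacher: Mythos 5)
Your proposal is correct, but there is no step-by-step comparison to make with the paper's own argument, because the paper does not prove Proposition \ref{prop:ueps conv to u0} at all: it defers entirely to \cite[Theorem 1.2]{SteklovWithSmallDirichlet}, cited as resting on Lemma \ref{lemma:lam+g}. Your self-contained route --- resolvent-type operators on the fixed space $\Leu^2(\GSte)$, uniform norm convergence along a contour via the same contradiction-and-compactness mechanism that drives Lemma \ref{lemma:lam+g}, Riesz projections with rank stability for assertion (i), and a direct energy/compactness argument for assertion (ii) --- is therefore an independent proof, and it buys something the citation does not: the multiplicity count in (i) becomes transparent (rank of nearby projections), and (ii) follows from the elementary identity $\NORM{\nabla u_{j+i}^\eps}^2_{\Leu^2(\Om)}=\lam_{j+i}^\eps$, with weak convergence plus convergence of gradient norms upgrading to strong $\mathrm{H}^1(\Om)$ convergence. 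Two points in your sketch deserve more care, though neither is a genuine gap. First, $(\lam-\Lambda^\eps)^{-1}$ acts on $\Leu^2(\GSte\setminus\overline{\GDel})$, not on $\Leu^2(\GSte)$, so the difference $P_\eps-P_0$ only makes sense through your composite operator $\mathcal{R}^\eps_\lam$ (restrict the datum, return the full trace); you should verify that $\frac{1}{2\pi i}\oint_\gamma \mathcal{R}^\eps_\lam\,\intd\lam$ is still an idempotent of the correct rank. It is: the residue at a perturbed eigenvalue is $-\sum_i(\cdot\,,u^\eps_{j+i})_{\GSte\setminus\GDel}\,u^\eps_{j+i}|_{\GSte}$, which squares to itself by the $\Leu^2(\GSte\setminus\GDel)$-orthonormality of the $u^\eps_{j+i}$, and has full rank because a harmonic function with vanishing Cauchy data on an open boundary arc vanishes identically, so the full traces $u^\eps_{j+i}|_{\GSte}$ remain linearly independent. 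Second, equating the rank of $P_\eps$ with the number of perturbed eigenvalues inside $\gamma$ counted with multiplicity tacitly uses that algebraic and geometric multiplicities coincide; this holds here by self-adjointness of the Steklov--Neumann problem but should be stated. Note also that Lemma \ref{lemma:lam+g}(i) already supplies the spectral exclusion you implicitly need (unique solvability for all $\lam\in Q$ and $\eps<\eps_0$ forces $Q$ to contain no perturbed eigenvalues), which is exactly what legitimizes shrinking $\gamma$ in (i); and your closing observation --- that the limiting orthonormal basis can only be chosen a posteriori, subsequence by subsequence, because the perturbed eigenfunctions within a cluster are determined only up to rotations of the eigenspace --- matches precisely the subsequential form in which the proposition is stated.
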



We note here that due to Weyl's Lemma all eigenfunctions $u_j^0$ are smooth in the interior of the domain. We also readily see that they are continuous at the interior of any line segment. However they might be discontinuous at the endpoints of those line segments.

\begin{theorem}\label{thm:EVAL-Asymp}
	Let $\GSte$, $\GNeu$, $\GDel$, $\{\lam_j^0\}_{j\in\NN}$, $\{\lam_j^\eps\}_{j\in\NN}$, $\{u_j^0\}_{j\in\NN}$, $\{u_j^\eps\}_{j\in\NN}$ be described as above, let $c_\star\in\GDel$ be the center of $\GDel$ and let $\eps>0$ be small enough. Pick $j\in\NN$. Let $\{u_j^0,\ldots,u_{j+N_j-1}^0\}$ be $\Leu^2(\GSte)$-normalized linearly independent eigenfunctions of $\lam_j^0$ and let $\lam^\eps_j$ converge to $\lam^0_j$
	
	If there exists a $i\in\{0,\ldots,N_j-1\}$ such that $u_{j+i}^0(c_\star)\neq 0$ then there exists an $u_j^{\eps_n}$ converging in $\mathrm{H}^1(\Om)$ to an element in the eigenspace of $\lam_j^0$, where
	\begin{align*}
		\lam_j^\eps-\lam_j^0 \, 
			&= \,2\,\eps\, \lam_j^0\sum_{i=0}^{N_j-1} (u_{j+i}^0(c_\star))^2 \,
			+\, \OO(\eps^2)\,,\\
		u_{j}^{\eps_{n}}(\xS)
			&= \frac{\sum_{i=0}^{N_j-1}u_{j+i}^0(\xS)\,u_{j+i}^0(c_\star)}
			{\Big[\sum_{i=0}^{N_j-1}u_{j+i}^0(c_\star)^2\Big]^{\nicefrac{1}{2}}}
			+\OO(\eps_n^1) \quad\text{ for } \xS\in\Om\,.
	\end{align*}
	Let $u_{j,\oplus}^{\eps_n}$ be another perturbed eigenfunction, as described in Proposition \ref{prop:ueps conv to u0}, then $u_{j,\oplus}^{\eps_n}(c_\star)=0$ and the associated eigenvalue $\lam_{j,\oplus}^\eps$ satisfies
	\begin{align*}
		\lam_{j,\oplus}^\eps-\lam_{j}^0
			= o(\eps^2)\,.
	\end{align*}
\end{theorem}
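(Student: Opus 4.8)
The plan is to extract the splitting of the degenerate eigenvalue $\lam_j^0$ from an exact biorthogonality identity, reduce it to a finite-dimensional (in fact rank-one) matrix eigenproblem on the $N_j$-dimensional eigenspace $E_j\DEF\mathrm{span}\{u_j^0,\ldots,u_{j+N_j-1}^0\}$, and then diagonalise. Writing the weak forms of the perturbed and unperturbed problems with $a(u,v)=\int_\Om\nabla u\cdot\nabla v\,\intd x$, $b^\eps(u,v)=\int_{\GSte\setminus\GDel}uv\,\intd\sigma$ and $b^0(u,v)=\int_{\GSte}uv\,\intd\sigma$, I would test the perturbed eigenrelation for $(\lam_j^\eps,u_j^\eps)$ against an unperturbed eigenfunction $\psi\in E_j$, test the unperturbed relation against $u_j^\eps$, and equate the two resulting expressions for $a(u_j^\eps,\psi)$ using the symmetry of $a$. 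This gives the exact identity
\begin{align*}
	(\lam_j^\eps-\lam_j^0)\int_{\GSte\setminus\GDel}\psi\,u_j^\eps\,\intd\sigma
	=\lam_j^0\int_{\GDel}\psi\,u_j^\eps\,\intd\sigma\,,
\end{align*}
which isolates the entire perturbation into an integral over the short arc $\GDel$.

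Next I would insert $\psi=u_{j+k}^0$ for $k=0,\ldots,N_j-1$ and invoke Proposition \ref{prop:ueps conv to u0}: along a subsequence $u_j^{\eps_n}\to\sum_i\alpha_i u_{j+i}^0$ in $\mathrm{H}^1(\Om)$, so the left-hand factor converges to $\alpha_k$ by $\Leu^2(\GSte)$-orthonormality. Since $c_\star$ lies in the interior of $\GDel$ the $u_{j+i}^0$ are smooth there, and because $\GDel$ is symmetric about $c_\star$ of length $2\eps$ the linear Taylor term integrates to zero, whence $\int_{\GDel}u_{j+k}^0\,u_{j+i}^0\,\intd\sigma=2\eps\,u_{j+k}^0(c_\star)u_{j+i}^0(c_\star)+\OO(\eps^3)$. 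Setting $w\DEF(u_{j+i}^0(c_\star))_i$, the identity becomes to leading order the rank-one eigenproblem $(\lam_j^\eps-\lam_j^0)\,\alpha=2\eps\lam_j^0\,\langle\alpha,w\rangle\,w$. The matrix $2\eps\lam_j^0\,ww^{\TransT}$ has the single nonzero eigenvalue $2\eps\lam_j^0\lvert w\rvert^2=2\eps\lam_j^0\sum_i(u_{j+i}^0(c_\star))^2$ with eigenvector $\alpha=w$, and eigenvalue $0$ on $w^{\perp}$. The active branch $\alpha=w$ therefore converges to $\sum_i u_{j+i}^0(c_\star)u_{j+i}^0$, whose $\Leu^2(\GSte)$-norm is $\lvert w\rvert$; this yields exactly the stated normalised eigenfunction and the eigenvalue expansion, and the hypothesis $u_{j+i}^0(c_\star)\neq0$ for some $i$ guarantees $w\neq0$ and hence a genuine first-order shift.

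For the remaining branches $u_{j,\oplus}^{\eps_n}$ the limiting coefficient vector lies in $w^{\perp}$, i.e. $u_{j,\oplus}^0(c_\star)=\sum_i\alpha_i u_{j+i}^0(c_\star)=\langle\alpha,w\rangle=0$, which is the asserted vanishing at $c_\star$. To upgrade the first-order shift from $0$ to $o(\eps^2)$ I would return to the exact identity with $\psi=u_{j,\oplus}^0$: the left factor tends to $1$, while on $\GDel$ the smooth function $u_{j,\oplus}^0$ vanishes at $c_\star$, so $u_{j,\oplus}^0(y)=\OO(\lvert y-c_\star\rvert)=\OO(\eps)$ there and $\NORM{u_{j,\oplus}^0}_{\Leu^2(\GDel)}=\OO(\eps^{3/2})$. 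Combining $\int_{\GDel}(u_{j,\oplus}^0)^2=\OO(\eps^3)$ with a Cauchy–Schwarz bound on the cross term $\int_{\GDel}u_{j,\oplus}^0(u_{j,\oplus}^{\eps_n}-u_{j,\oplus}^0)=\OO(\eps^{3/2})\cdot\OO(\eps)=\OO(\eps^{5/2})$ gives $\int_{\GDel}u_{j,\oplus}^0\,u_{j,\oplus}^{\eps_n}\,\intd\sigma=o(\eps^2)$, whence $\lam_{j,\oplus}^\eps-\lam_j^0=o(\eps^2)$.

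The main obstacle is the quantitative error control rather than the leading-order bookkeeping. Proposition \ref{prop:ueps conv to u0} only supplies qualitative $\mathrm{H}^1$-convergence along subsequences, so I would first have to promote this to an $\OO(\eps)$ rate, presumably by feeding the $\OO(\eps)$-sized residual of the identity back through the uniform solvability estimate of Lemma \ref{lemma:lam+g}. The delicate point is that the $\OO(\eps^2)$ eigenvalue remainder for the active branch requires the integral over $\GDel$ to be controlled to order $\eps^2$, which a global trace bound only delivers to order $\eps^{3/2}$; one genuinely needs a \emph{local} $\OO(\eps)$ estimate for $u_j^{\eps_n}-u^0$ near $\GDel$, i.e. an $L^\infty$- or small-arc-$\Leu^2$-type bound. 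I expect this to force either a bootstrap using the already-derived leading-order formula, or the construction of an explicit boundary-layer corrector concentrated on $\GDel$, with additional care at the two new Steklov–Neumann junctions at the endpoints of $\GDel$, where $u_j^{\eps_n}$ may develop weak singularities at distance $\eps$ from $c_\star$.
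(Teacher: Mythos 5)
Your opening reduction is exactly the paper's starting point: testing the two weak forms against each other yields the identity $(\lam_j^\eps-\lam_j^0)\,(u_{j+i}^0,u_j^\eps)_{\GSte\setminus\GDel}=\lam_j^0\,(u_{j+i}^0,u_j^\eps)_\GDel$, which is the first displayed step of the paper's proof, and your orthogonality argument identifying the limits of the remaining branches and forcing $u_{j,\oplus}^0(c_\star)=0$ also matches the paper. The gap is the one you name yourself but do not close, and it is worse than you suggest: \emph{every} quantitative claim of the theorem hinges on controlling the trace of $u_j^\eps$ on the short arc $\GDel$, and Proposition \ref{prop:ueps conv to u0} gives only qualitative subsequential $\mathrm{H}^1(\Om)$ convergence. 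Replacing $u_j^\eps$ by its limit inside $(u_{j+k}^0,u_j^\eps)_\GDel$ costs, via Cauchy--Schwarz and the global trace theorem, $\NORM{u_{j+k}^0}_{\Leu^2(\GDel)}\,\NORM{u_j^\eps-u^0}_{\Leu^2(\GDel)}=\OO(\eps^{1/2})\cdot o(1)=o(\eps^{1/2})$, which already swamps the $\OO(\eps)$ main term $2\eps\,u_{j+k}^0(c_\star)\,u_{j+i}^0(c_\star)$; so on your route even the first-order coefficient $2\eps\lam_j^0\sum_i(u_{j+i}^0(c_\star))^2$ is unproven, not just the $\OO(\eps^2)$ remainder, the $\OO(\eps_n)$ eigenfunction rate, or the $o(\eps^2)$ bound for the orthogonal branches (your $\OO(\eps^{5/2})$ cross-term estimate silently inserts the unproven local $\OO(\eps)$ rate $\NORM{u_{j,\oplus}^{\eps_n}-u_{j,\oplus}^0}_{\Leu^2(\GDel)}=\OO(\eps)$; a trace bound alone gives only $o(\eps^{3/2})$ for that term).

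The paper closes precisely this hole with an idea absent from your proposal, and it is a mechanism rather than a refinement: since $\del_\nu u_j^\eps=0$ on $\GDel$, the perturbed eigenfunction is represented through the \emph{unperturbed} Green's function, $u_j^\eps(\xS)=\lam_j^\eps(\mathrm{S}_\GSte^{\lam^\eps},u_j^\eps)_\GDel$, into which the pole decomposition of Theorem \ref{thm:decomp S} is inserted ($\mathrm{S}_\GSte^{\lam}=2\Gamma^0+\text{singular sum}+\mathrm{R}_\GSte^{\lam}$ on the boundary, the factor $2$ coming from the halved Dirac measure at the boundary). Flattening $\GDel$, imposing the Neumann condition, and using the Poisson-kernel jump relation
\begin{align*}
	\lim_{h\searrow 0}\int_{-\eps}^{\eps}\frac{h}{h^2+(\tau-t)^2}\,u_j^\eps(t)\,\intd t=\pi\,u_j^\eps(\tau)
\end{align*}
produces an \emph{exact} pointwise identity for $u_j^\eps$ on $\GDel$, namely $u_j^\eps(t)=\sum_i u_{j+i}^0(y)\,(u_{j+i}^0,u_j^\eps)_{\GSte\setminus\GDel}+(\del_\nu\mathrm{R}_\GSte^{\lam^\eps},u_j^\eps)_\GDel$. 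Because this identity holds exactly (no convergence rate for $u_j^\eps\to u^0$ is ever invoked), the ratios $(u_{j+i}^0,u_j^\eps)_\GDel/(u_j^0,u_j^\eps)_\GDel=u_{j+i}^0(c_\star)/u_j^0(c_\star)+\OO(\eps)$, the interior formula for $u_j^{\eps_n}(\xS)$ with its $\OO(\eps_n)$ error, and all stated remainders then follow by Taylor expansion on the length-$2\eps$ arc, the $\mathrm{R}$-term being $\OO(\eps)$ as an integral of a bounded density over $\GDel$. To salvage your route you would need either this representation or the boundary-layer corrector you sketch (with care at the two new Steklov--Neumann junctions); without one of these, the proposal establishes only the identification of the limiting eigenspace directions, and none of the stated error orders.
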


From the proof we can retrieve a formula for the $\OO(\eps^1_n)$ part in $u_j^\eps$ up to order $\OO(\eps^2)$. But we will not use it and for sake of readability it is omitted. 

The structure of the proof is as follows. Let $(\cdot\,,\cdot)_{B}$ denote the $\Leu^2(B)$ inner product over an integrable domain $B$ and $\mathrm{S}_\GSte^{\lam^{\eps}}$ the Green's function to the Steklov-Neumann problem. Using Green's identity we will first readily obtain 
\begin{align*}
		(\lam_j^\eps-\lam_j^0)\, ( u_{j}^0 \,,u_j^\epsilon)_{\GSte\setminus\GDel}
			&= \lam^0_j\, ( u_{j}^0\,, u_j^\epsilon)_\GDel
\end{align*}
and 
\begin{align*}
	u^{\eps}_j(\xS)
			= \lam_j^\eps( \mathrm{S}_\GSte^{\lam^{\eps}}\,, u^{\eps}_j)_{\GDel}\,.
\end{align*}
Using both equations and the decomposition in Theorem \ref{thm:decomp S}, together with the condition $\del_\nu u_j^\eps |_{\GDel}=0$ from the governing equation, we will obtain a formula for $u_j^\eps |_{\GDel}$. From there on we can trace back and obtain a formula for the eigenfunction on the whole domain $\Om$. Using again the above equation we will finally obtain the asymptotic formula for $\lam_j^\eps-\lam_j^0$. 

This also supplies us with a way to find the limit of an perturbed eigenfunction. Using orthogonality between two limits of perturbed eigenfunctions, we readily argue why every other eigenfunction must be zero at $c_\star$. Using the above expression for $\lam_j^\eps-\lam_j^0$ together with the Taylor expansion of the limit, we infer the convergence rate $o(\eps^2)$. 

\begin{proof}
	We denote the sequence $\{\eps_n\}_n$ by $\eps\rightarrow 0$ and the $\Leu^2(\GSte)$-normalized eigenfunctions to $\lam_j$ by $\{u_j^0,\ldots, u_{j+N_j-1}^0\}$.  Using Green's first identity and the governing equation (\ref{pde:SteklovHom}), we have for $i=0,\ldots,N_j-1$ that
	\begin{align*}
		( \nabla u_{j+i}^0 \,, \nabla u_j^\epsilon)_\Om
			&= (\del_{\nu} u_{j+i}^0 \,, u_j^\epsilon)_{\del\Om}
			= \lam^0_j ( u_{j+i}^0 \,, u_j^\epsilon)_\GSte\,,\\
		( \nabla u_{j+i}^0 \,, \nabla u_j^\epsilon)_\Om
			&= ( u_{j+i}^0 \,, \del_{\nu} u_j^\epsilon)_{\del\Om}
			= \lam^\eps_j ( u_{j+i}^0\,, u_j^\epsilon)_\GSte
				- \lam^\eps_j ( u_{j+i}^0 \,,u_j^\epsilon)_\GDel\,.
	\end{align*}
	From this it is easy to see that
	\begin{align}
		(\lam_j^\eps-\lam_j^0)\, ( u_{j+i}^0 \,,u_j^\epsilon)_{\GSte\setminus\GDel}
			&= (-\lam_j^\eps+\lam_j^0)\, ( u_{j+i}^0\,, u_j^\epsilon)_\GDel +\lam_j^\eps( u_{j+i}^0\,, u_j^\epsilon)_\GDel\nonumber \\
			&=
			\lam^0_j\, ( u_{j+i}^0\,, u_j^\epsilon)_\GDel\,.\label{equ:EVALExpansion-without-EFct-Taylor}
	\end{align}
	Using Green's first identity and the decomposition in Theorem \ref{thm:decomp S}, we have
	\begin{align*}
		u^{\eps}_j(\xS)
			&= \lam_j^\eps( \mathrm{S}_\GSte^{\lam^{\eps}}\,, u^{\eps}_j)_{\GDel}\\
			&=  \lam_j^\eps( \Gamma^0\,, u^{\eps}_j)_{\GDel}
				+ \lam_j^\eps \sum_{i=0}^{N_j-1} \frac{u_{j+i}^0(\xS)\, (u_{j+i}^0\,, u^{\eps}_j)_{\GDel}}{\lam_j^\eps-\lam_j^0}
				+ \lam_j^\eps\,(\mathrm{R}_\GSte^{\lam^\eps}\,, u^{\eps}_j)_{\GDel}\,,
	\end{align*}
	where $\mathrm{R}^{\lam^\eps}_{\GSte}(\xS\,,\cdot)\in\mathrm{H}^1(\Om)\cap C^\infty(\Om)$ and it is analytic in a neighborhood of $\lam_j^\eps$, according to Theorem \ref{thm:decomp S}. 
	
	From (\ref{equ:EVALExpansion-without-EFct-Taylor}) follows $\frac{(u_{j+i}^0\,, u^{\eps}_j)_{\GDel}}{\lam_j^\eps-\lam_j^0}=\frac{(u_{j+i}^0\,, u^{\eps}_j)_{\GSte\setminus\GDel}}{\lam_j^0}$, thus we can infer that
	\begin{align}\label{equ:ueps-general-formula}
		u^{\eps}_j(\xS)
			= \lam_j^\eps( \Gamma^0\,, u^{\eps}_j)_{\GDel}
				+ \frac{\lam_j^\eps}{\lam_j^0} \sum_{i=0}^{N_j-1} u_{j+i}^0(\xS)\, (u_{j+i}^0\,, u^{\eps}_j)_{\GSte\setminus\GDel}
				+ \lam_j^\eps\,(\mathrm{R}_\GSte^{\lam^\eps}\,, u^{\eps}_j)_{\GDel}\,.
	\end{align}
	
	{Without loss of generality, we assume that  $\GDel=(-\eps,\eps) \times \{0\}$ and that the outside normal on $\GDel$ is $(0,1)^\TransT$, where the superscript $\TransT$ denotes the transpose.}
	
	For $y\in\GDel$, we define $(\tau,0)^\TransT=y$. By definition, $\del_\nu u_j^\eps(y) = 0$ for $y\in\GDel$, which yields with the last equation that
	\begin{align*}
		\lim_{h\searrow 0}
		\frac{2\lam_j^\eps}{2\pi}\inteps &\frac{h\,u_j^\eps(t)}{h^2+(t-\tau)^2}\intd t\\
			& = \lam_j^\eps \sum_{i=0}^{N_j-1} u_{j+i}^0(y)\, (u_{j+i}^0\,, u^{\eps}_j)_{\GSte\setminus\GDel}
				+ \lam_j^\eps\,(\del_\nu \mathrm{R}_\GSte^{\lam^\eps}\,, u^{\eps}_j)_{\GDel}\,.
	\end{align*}
	Here we considered that we in fact placed the singularity at the boundary, this means that the associated Dirac measure is halved, thus we have $\mathrm{S}_\GSte^{\lam^{\eps}} = 2\Gamma^0 + \mathrm{R}_\GSte^{\lam^\eps}$. Furthermore, we can pull the normal derivative inside the $(\mathrm{R}_\GSte^{\lam^\eps}\,, u^{\eps}_j)_{\GDel}$, which follows by using Green's identity and the dominated convergence theorem.
	Using partial integration we  readily see that 
	\begin{align*}
		\lim_{h\searrow 0}\int_{-\eps}^\eps \frac{h}{h^2+(\tau-t)^2}\,u_j^\eps(t)\intd t = \pi u_j^\eps(\tau).
	\end{align*}
	This leads us to
	\begin{align}\label{proof:thm O(eps) conv:uijeps at GDel}
		u_j^\eps(t) 
			= \sum_{i=0}^{N_j-1} u_{j+i}^0(y)\, (u_{j+i}^0\,, u^{\eps}_j)_{\GSte\setminus\GDel}
				+ (\del_\nu \mathrm{R}_\GSte^{\lam^\eps}\,, u^{\eps}_j)_{\GDel}\,.
	\end{align}
	Using  (\ref{equ:EVALExpansion-without-EFct-Taylor}), we see that for all $i\in\{0,\ldots,N_j-1\}$
	\begin{align}\label{proof:thm O(eps) conv:(uji0,uje)=frac(uj0,uje)} 
		(u_{j+i}^0\,,u_j^\eps)_{\GSte\setminus\GDel}
			=\frac{(u_{j+i}^0\,,u_j^\eps)_{\GDel}}
			{(u_{j}^0\,,u_j^\eps)_{\GDel}}
			(u_{j}^0\,,u_j^\eps)_{\GSte\setminus\GDel}\,,
	\end{align}
	where
	\begin{align}\label{proof:thm O(eps) conv:(uji0,uje)/(uj0,uje)=(f+o)} 
		\frac{(u_{j+i}^0\,,u_j^\eps)_{\GDel}}
		{(u_{j}^0\,,u_j^\eps)_{\GDel}}
			=\frac{u_{j+i}^0(c_\star)}{u_{j}^0(c_\star)}+\OO(\eps^1)\,,
	\end{align}
	which we readily infer from (\ref{proof:thm O(eps) conv:uijeps at GDel}), and where we used the assumption $u_{j+i}^0(c_\star)\neq 0$, where we assumed $i=0$ without loss of generality.
	From (\ref{equ:ueps-general-formula}) we infer for $\xS\in\overline{\Om}\setminus\overline{\GDel}$ that
	\begin{align*}
		u_j^\eps(\xS) 
			=& \frac{\lam_j^\eps}{\lam_j^0}\,(u_{j}^0\,,u_j^\eps)_{\GSte\setminus\GDel}
			\sum_{i=0}^{N_j-1}
				u_{j+i}^0(\xS) \frac{u_{j+i}^0(c_\star)}{u_{j}^0(c_\star)}
			\,+\OO(\eps^1)\,.
	\end{align*}
	Let us determine $(u_{j}^0\,,u_j^\eps)_{\GSte\setminus\GDel}$. From $(u_{j}^\eps,u_{j}^\eps)_{\GSte\setminus\GDel}=1$ and (\ref{proof:thm O(eps) conv:(uji0,uje)=frac(uj0,uje)}), it follows readily that
	\begin{align*}
		(u_{j}^0\,,u_j^\eps)_{\GSte\setminus\GDel}^2
			= \bigg[
				\frac{\lam_j^\eps}{\lam_j^0}\sum_{i=0}^{N_j-1}\frac{u_{j+i}^0(c_\star)^2}{u_{j}^0(c_\star)^2}
			\bigg]^{-1}
			\bigg(
				1
				+\OO(\eps^1)
			\bigg)\,.
	\end{align*}
	We readily see that $\frac{\lam_j^\eps}{\lam_j^0}=1+\OO(\eps)$, hence, we infer the formula for $u_j^\eps(\xS)$ in Theorem \ref{thm:EVAL-Asymp}. By using (\ref{equ:EVALExpansion-without-EFct-Taylor}) we obtain
	\begin{align*}
		\lam_j^\eps-\lam_j^0\,
			= \frac{\lam_j^0}{(u_j^0\,, u_j^\eps)_{\GSte\setminus\GDel}}
			\sum_{i=0}^{N_j-1}\Big[
				(u_{j}^0,u_{j+i}^0)_\GDel \frac{u_{j+i}^0(c_\star)}{u_{j}^0(c_\star)}(u_{j}^0\,, u_j^\eps)_{\GSte\setminus\GDel}
			\Big]+\OO(\eps^2)\,.
	\end{align*}
	This leads us to the formula for $\lam_j^\eps-\lam_j^0$ in Theorem \ref{thm:EVAL-Asymp}.
	
	Let us consider other perturbed eigenfunctions. We claim that every other perturbed eigenfunctions are zero at $c_\star$. To this end, in the case that not all eigenfunctions are zero at $c_\star$, we can pick through the Gram-Schmidt orthogonalization an $\Leu^2(\GSte)$-orthonormal basis $\{u_j^0,\ldots,u_{j+N_j-1}^0\}$ of $\lam_j$ such that $u_{j+i}^0(c_\star)=0$ for all $i\in\{1,\ldots,N_j-1\}$, and $u_{j}^0(c_\star)\neq 0$. Assume that two perturbed eigenfunctions $u_{j,0}^\eps,u_{j,\oplus}^\eps$ converge to the eigenspace of $\lam_j^0$. Then by the previously derived formula, one of those, say $u_{j,0}^\eps$, converges to
	\begin{align*}
		\frac{\sum_{i=0}^{N_j-1}u_{j+i}^0(\xS)\,u_{j+i}^0(c_\star)}
			{\Big[\sum_{i=0}^{N_j-1}u_{j+i}^0(c_\star)^2\Big]^{\nicefrac{1}{2}}}
			= u_{j}^0(\xS)\frac{u_{j}^0(c_\star)}
			{|u_{j}^0(c_\star)|}\,.
	\end{align*}
	Since the limits of the two perturbed eigenfunctions are orthogonal to each other, we infer that $u_{j,\oplus}^\eps$ converges to the span of $\{u_{j+1}^0,\ldots,u_{j+N_j-1}^0\}=(u_{j}^0)^{\perp}$. Hence $\lim_{\eps\rightarrow 0} u_{j,\oplus}^\eps(c_\star)=0$. Let $\lam_{j,\oplus}^\eps$ be the associated eigenvalue to $u_{j,\oplus}^\eps$ and $u_{j,\oplus}^0$ its limit. With (\ref{equ:EVALExpansion-without-EFct-Taylor}) we infer that
	\begin{align*}
		(\lam_j^\eps-\lam_j^0)
			= \lam^0_j\, \frac{( u_{j,\oplus}^0\,, u_{j,\oplus}^\epsilon)_\GDel}{( u_{j,\oplus}^0 \,,u_{j,\oplus}^\epsilon)_{\GSte\setminus\GDel}}
			= \frac{\OO(\eps^2)\,o(1)}{1+o(1)}\,.
	\end{align*}
	This concludes the proof of Theorem \ref{thm:EVAL-Asymp}.
\end{proof}

\newcommand{\SteOm}{{S_{\del\Om}^\lam}}
\newcommand{\SteSte}{{S_{\GSte}^\lam}}
\newcommand{\SteM}{{S_{\mathrm{M}}^\lam}}
\newcommand{\UGSl}{{\mathrm{U}_\GSte^\lam}}
\newcommand{\ZGS}{{\mathrm{Z}_\GSte}}
\section{Asymptotics for the Steklov-Neumann Function}\label{Sec:STFct Asymp}
Let $\SteSte(\xS\,,\cdot)$ be the Green's function to the Steklov-Neumann problem (\ref{pde:SteklovFunda}) to the partition $(\GSte,\GNeu)$ of $\del\Om$, where we assume that $\GSte$ is non-empty and open. Let $\{ u_j \}_{j\in\NN}$ be the normalized Steklov-Neumann eigenfunctions corresponding to the Steklov-Neumann eigenvalues $\{\lam_j\}_{j\in\NN}$ to the partition $(\GSte,\GNeu)$, such that all eigenfunctions are mutually orthogonal to each other within the $\Leu^2(\GSte)$ inner product.

\begin{theorem}\label{thm:decomp S}
	Let $\GSte,\GNeu, \{ u_j \}_{j\in\NN}, \{\lam_j\}_{j\in\NN}$ be as described above. Let $\lam\geq 0$ such that $\lam \neq \lam_j$ for all $j\in\NN$. For all $\xS\in\Om$, the Steklov-Neumann function $\SteSte(\xS\,,\cdot)$, given by (\ref{pde:SteklovFunda}), exists and $\SteSte(\xS\,,\cdot)\in\Leu^2(\Om)$.
	
	Furthermore pick $j\in\NN$, then for $y\in\Om$, $y\neq \xS$ and for $\lam$ close enough to $\lam_j$, where $\lam$ has multiplicity $N_j$, that is $\lam_j=\ldots=\lam_{j+(N_j-1)}$, we have that
	\begin{align*}
		\SteSte(\xS\,, y) 
			= \frac{1}{2\pi}\log(|x-y|)+\sum_{i=0}^{N_j-1} \frac{u_{j+i}(\xS)\,u_{j+i}(y)}{\lam-\lam_j}+\mathrm{R}_{\GSte}^\lam(\xS\,, y)\,.
	\end{align*}
	Moreover, $\mathrm{R}^\lam_{\GSte}(\xS\,,\cdot)\in\mathrm{H}^1(\Om)\cap C^\infty(\Om)$ and it is analytic in a neighborhood of $\lam_j$.
\end{theorem}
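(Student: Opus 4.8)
The plan is to isolate the local singularity and the resonant (pole) part of $\SteSte$ explicitly, and then to show that whatever remains is a genuinely regular, analytically varying solution of a Steklov--Neumann boundary value problem. Write $G_0(\xS,y)\DEF\frac{1}{2\pi}\log|\xS-y|$ for the free-space fundamental solution, so that $\Laplace_y G_0(\xS,\cdot)=\delta_\xS$ in $\Om$. The difference $R\DEF\SteSte(\xS,\cdot)-G_0(\xS,\cdot)$ is then harmonic in $\Om$ and solves a Steklov--Neumann problem with inhomogeneous data $\del_\nu R-\lam R=\lam G_0-\del_\nu G_0$ on $\GSte$ and $\del_\nu R=-\del_\nu G_0$ on $\GNeu$. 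For $\lam\neq\lam_j$ the homogeneous version of this problem has only the trivial solution, so the Fredholm alternative together with the a~priori estimate obtained exactly as in Lemma~\ref{lemma:lam+g} (after first subtracting a fixed, $\lam$-independent harmonic function solving a mixed Zaremba problem to absorb the Neumann data on $\GNeu$) yields existence, uniqueness and an $\mathrm{H}^1(\Om)$-bound for $R$. Since $G_0(\xS,\cdot)\in\Leu^2(\Om)$ and $R\in\mathrm{H}^1(\Om)$, this proves $\SteSte(\xS,\cdot)\in\Leu^2(\Om)$.

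The pole is produced by a single application of Green's second identity. Pairing $\SteSte(\xS,\cdot)$ against the eigenfunction $u_{j+i}$ (harmonic, with $\del_\nu u_{j+i}=\lam_j u_{j+i}$ on $\GSte$ and $\del_\nu u_{j+i}=0$ on $\GNeu$), the interior term collapses to $-u_{j+i}(\xS)$ while the boundary term reduces to $(\lam_j-\lam)(\SteSte,u_{j+i})_\GSte$, because all contributions on $\GNeu$ vanish. This gives at once
\begin{align*}
	(\SteSte(\xS,\cdot)\,,u_{j+i})_\GSte=\frac{u_{j+i}(\xS)}{\lam-\lam_j}\,,
\end{align*}
so the projection of $\SteSte$ onto the eigenspace of $\lam_j$, measured in the $\Leu^2(\GSte)$ inner product in which the $u_{j+i}$ are orthonormal, diverges with exactly the claimed residue $\sum_i u_{j+i}(\xS)u_{j+i}(y)$.

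Next I would define the candidate remainder $\mathrm{R}_\GSte^\lam\DEF\SteSte-G_0-\sum_{i=0}^{N_j-1}\frac{u_{j+i}(\xS)u_{j+i}(y)}{\lam-\lam_j}$ and verify it is regular. Because each pole term is harmonic in $y$, the remainder is harmonic in $\Om$, hence lies in $C^\infty(\Om)$ by interior elliptic regularity. It solves the Steklov--Neumann problem with data $\del_\nu\mathrm{R}_\GSte^\lam-\lam\mathrm{R}_\GSte^\lam=\lam G_0-\del_\nu G_0+\sum_i u_{j+i}(\xS)u_{j+i}$ on $\GSte$ and $\del_\nu\mathrm{R}_\GSte^\lam=-\del_\nu G_0$ on $\GNeu$; the correction term is $(\lam-\lam_j)$ times the pole and therefore stays bounded as $\lam\to\lam_j$, so the data extend continuously through $\lam=\lam_j$. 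The crucial check is that at $\lam=\lam_j$ this data satisfies the Fredholm solvability condition, i.e.\ orthogonality to every $u_{j+i}$: testing the data against $u_{j+i}$ and using Green's identity between $u_{j+i}$ and $G_0$ to rewrite $\int_{\del\Om}u_{j+i}\del_\nu G_0 = u_{j+i}(\xS)+\lam_j(G_0,u_{j+i})_\GSte$, one finds that all terms cancel precisely because of the added residue. Hence $\mathrm{R}_\GSte^\lam$ is a bounded $\mathrm{H}^1(\Om)$-solution even at $\lam=\lam_j$.

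The main obstacle is the analyticity of $\mathrm{R}_\GSte^\lam$ in a neighborhood of $\lam_j$, since the underlying problem is resonant there. I would obtain it by recasting the solution operator as the resolvent of the self-adjoint Steklov--Neumann operator $\mathcal{A}$ on $\Leu^2(\GSte)$ (Dirichlet data on $\GSte$, Neumann data on $\GNeu$, returning the normal derivative on $\GSte$), whose spectrum is $\{\lam_j\}$ with $\Leu^2(\GSte)$-orthonormal eigenbasis $\{u_j\}$. By the analytic Fredholm theorem, $\lam\mapsto(\mathcal{A}-\lam)^{-1}$ is meromorphic with a simple pole at $\lam_j$ whose principal part is the spectral projector $-(\lam-\lam_j)^{-1}\sum_i(\cdot,u_{j+i})_\GSte\,u_{j+i}$; subtracting this principal part is exactly the pole removed by hand above, and what remains is the analytic reduced resolvent. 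Since the data defining $\mathrm{R}_\GSte^\lam$ depend affinely, hence analytically, on $\lam$, composing with the analytic reduced resolvent shows that $\mathrm{R}_\GSte^\lam(\xS,\cdot)$ is $\mathrm{H}^1(\Om)$-valued analytic near $\lam_j$, which completes the proof.
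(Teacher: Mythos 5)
Your proposal is correct, and it coincides with the paper's proof in its first half: the paper uses exactly your decomposition $\SteSte=\Gamma^0+\ZGS+\UGSl$, with the same $\lam$-independent Zaremba correction $\ZGS$ absorbing the Neumann data on $\GNeu$, and the same appeal to the Fredholm alternative plus the a priori estimate of Lemma \ref{lemma:lam+g} for existence and the $\Leu^2(\Om)$ membership. Where you genuinely diverge is in how the pole and the remainder are handled. The paper expands $\UGSl$ in the \emph{complete} eigenbasis $\{u_j\}$ of the space $W_\del$ of harmonic functions with vanishing Neumann data on $\GNeu$ (completeness cited from the sloshing literature), computes every coefficient $(\UGSl,u_j)_\GSte=\frac{u_j(\xS)}{\lam-\lam_j}-(\Gamma^0,u_j)_\GSte$ by Green's identity, and simply defines $\mathrm{R}^\lam_\GSte$ as what is left after removing the resonant terms; analyticity near $\lam_j$ is then left essentially implicit in this expansion. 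You avoid completeness altogether: you extract only the finite resonant projection via Green's second identity (your formula $(\SteSte,u_{j+i})_\GSte=\frac{u_{j+i}(\xS)}{\lam-\lam_j}$ is exactly the paper's coefficient identity, since the $-(\Gamma^0,u_{j+i})_\GSte$ term is the projection of the subtracted $G_0$), verify the Fredholm compatibility of the corrected boundary data at $\lam=\lam_j$ by hand, and then prove analyticity honestly through the spectral projector and the reduced resolvent of the self-adjoint sloshing operator on $\Leu^2(\GSte)$, noting that the data depend affinely on $\lam$ so that the eigenspace component of $(\mathcal{A}-\lam)^{-1}g_\lam$ stays bounded. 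Your route buys a rigorous, self-contained justification of the analyticity statement, which is the weakest point of the paper's write-up, at the cost of not exhibiting the global meromorphic structure (all poles at once) that the paper's eigenfunction expansion delivers for free once completeness is granted. One presentational caveat: your intermediate assertion that the remainder is ``a bounded $\mathrm{H}^1(\Om)$-solution even at $\lam=\lam_j$'' is not yet justified at the point where you state it — compatibility of the limiting data alone does not give uniform bounds — but your subsequent reduced-resolvent paragraph supplies precisely the missing mechanism, so the proof as a whole is sound.
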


We remark here that we expect that the term $\mathrm{R}_{\GSte}^\lam$ blows up whenever the length of $\GSte$ goes to zero. The function $\frac{1}{2\pi}\log(|x-y|)$ is the Green's function to the free-space Laplace equation, we refer to \cite[Section 2.2]{LPTSA} and \cite[Section 2.3]{MCMP}.

\begin{proof}
	We use the decomposition 
	\begin{align}\label{proof:thm decompS: decomp}
		\SteSte(\xS\,, y)
			= \Gamma^0(\xS\,, y) + \ZGS(\xS\,,y) + \UGSl(\xS\,,y)\,,
	\end{align}
	where $\Gamma^0(\xS\,, y)\DEF\frac{1}{2\pi}\log(|x-y|)\in \Leu^2(\Om)$, and the functions $\ZGS(\xS\,,\cdot)\,, $ $\,\UGSl(\xS\,, \cdot)$ are solutions to 
	\begin{align*}
		\left\{ 
		\begin{aligned}
			\Laplace      \ZGS	&= 0 					\quad &&\text{in} \; &&\Om\,, \\
			\ZGS 				&= 0 				    \quad &&\text{on} \; &&\GSte\,, \\
		 	\del_{\nu}    \ZGS  &= -\del_{\nu} \Gamma^0 \quad &&\text{on} \; &&\GNeu \,,
		\end{aligned}
		\right.
		\quad
		\left\{ 
		\begin{aligned}
			\Laplace      \UGSl	&= 0 \quad &&\text{in} \; &&\Om\,, \\
			\del_{\nu}    \UGSl-\lam\,\UGSl		&= f \quad &&\text{on} \; &&\GSte\,, \\
		 	\del_{\nu}    \UGSl &= 0 \quad &&\text{on} \; &&\GNeu \,,
		\end{aligned}
		\right.
	\end{align*}
	where 
	$$
		f(\xS\,, \cdot)\DEF-(\del_\nu\Gamma^0(\xS\,, \cdot)+\del_\nu \ZGS(\xS\,, \cdot)-\lam\,\Gamma^0(\xS\,, \cdot))\in \Leu^2(\GSte)\,.
	$$ 
	From \cite[Theorem 4.10]{McLeanEllitpicSystems}, and the fact that the first mixed Dirichlet - Neumann eigenvalue is not zero \cite{ourzarembapaper}, as long as $\GSte$ is not empty, we have that $\ZGS(\xS\,,\cdot)\in\mathrm{H}^1(\Om)$ exists. From Lemma \ref{lemma:lam+g}, we see that $\UGSl(\xS\,,\cdot)\in\mathrm{H}^1(\Om)$ exists. We infer that $\SteSte(\xS\,,\cdot)$ exists and that $\SteSte(\xS\,,\cdot)\in\Leu^2(\Om)$. 
	
	With a small modification in the proof of \cite[Theorem 7.3]{SteklovExistence}, respectively \cite[Theorem 5.3]{SteklovExistence}, we see that $\{u_j\}_{j\in\NN}$ is a maximal orthonormal set to 
	$$
		W_{\del}\DEF\{v\in\mathrm{H}^1(\Om) \,|\, v \text{ weak solution to }\Laplace v = 0 \text{ and  }\del_\nu v\,|_{\GNeu}=0\}\,,
	$$
	within the $\mathrm{H}^1(\Om)$ inner product $\mathscr{A}(\cdot\,,\cdot)$ given in \cite[Equation (6.5)]{SteklovExistence}. Consider that $\UGSl\in W_{\del}$, thus we can write
	\begin{align}\label{proof:thm decompS: Sum U}
		\UGSl(\xS\,,y) 
			= \sum_{j=1}^\infty (\UGSl(\xS\,,\cdot)\,, u_j(\cdot))_{\GSte}\,u_j(y)\,,
	\end{align}
	where $(\cdot\,,\cdot)_{B}$ denotes the $\Leu^2(B)$ inner product over an integrable domain $B$. Using Green's first identity and the governing equations multiple times, we obtain that
	\begin{align*}
		-\lam\,(\UGSl\,, u_j)_\GSte
			&= (-\del\UGSl\,, u_j)_\GSte +(-\del\Gamma^0+\lam\,\Gamma^0\,, u_j)_\GSte+(-\del\ZGS\,,u_j)_\GSte\,\\
			&= -\lam_j\,(\UGSl\,, u_j)_\GSte + (-\del\Gamma^0+\lam\,\Gamma^0\,, u_j)_\GSte+0-(\del\Gamma^0\,, u_j)_\GNeu\\
			&= -\lam_j\,(\UGSl\,, u_j)_\GSte  -u_j(\xS)-(\lam_j-\lam)(\Gamma^0\,, u_j)_\GSte\,.
	\end{align*}
	After rearranging the terms we obtain that
	\begin{align}\label{proof:thm decompS: (U,uj)}
		(\UGSl\,, u_j)_\GSte
			= \frac{u_j(\xS)}{(\lam-\lam_j)}-(\Gamma^0\,, u_j)_\GSte\,.
	\end{align}
	With the definition 
	\begin{align*}
		\mathrm{R}^\lam_{\GSte}(\xS\,, y)
			\DEF \ZGS(\xS\,,y) + \UGSl(\xS\,,y)
			-\sum_{i=0}^{N_j-1} \frac{u_{j+i}(\xS)\,u_{j+i}(y)}{\lam-\lam_j}\,,
	\end{align*}
	and Weyl's lemma, we conclude that Theorem \ref{thm:decomp S} holds.
\end{proof}

Next we search for an asymptotic formula for $\SteSte$, when we insert a small Neumann condition in the boundary. To this end, we define $\GDel$ as a small boundary interval of length $2\eps$ with center $c_\star\in\RR^2$, such that ${\GDel}\subset\GSte$. Then we define $\SteM(\xS\,,\cdot)$ to be the Green's function to the Steklov-Neumann problem (\ref{pde:SteklovFunda}) to the partition $(\GSte\setminus\overline{\GDel}, \GNeu\cup\GDel)$, where $\lam$ is not a Steklov-Neumann eigenvalue to both partitions.

\begin{theorem}\label{thm:FundaSteklovAsympt}
	Let $\GSte,\GNeu, \GDel$ be as described above and let $\eps>0$ be small enough. Let $\lam\geq 0$ not be a Steklov-Neumann eigenvalue to either of the two partitions $(\GSte,\GNeu)$ and $(\GSte\setminus\overline{\GDel}, \GNeu\cup\GDel)$. Then we have for all $\xS\in \Om, y\in\Om\setminus\{\xS\}$ that
	\begin{align*}
		\SteM(\xS\,,y) = \SteSte(\xS\,, y) + 2\,\lam\,\eps\,\SteSte(\xS\,, c_\star)\,\SteSte(y\,, c_\star)+\OO(\eps^2)\,.
	\end{align*}
	Let $y\in\GDel$ then
	\begin{align*}
		\SteM(\xS\,,y) =  \SteSte(\xS\,, y)+\OO(\eps)\,.
	\end{align*}
\end{theorem}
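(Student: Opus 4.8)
The plan is to reduce everything to one exact integral identity for the difference and then read off its size. Fix the source $\xS\in\Om$ and set $W\DEF\SteM(\xS,\cdot)-\SteSte(\xS,\cdot)$. Both Green's functions share the same interior singularity $\delta_0(\xS-\cdot)$, so $W$ is harmonic in $\Om$. Comparing the boundary conditions of the two partitions in (\ref{pde:SteklovFunda}), $W$ satisfies $\del_\nu W=\lam W$ on $\GSte\setminus\overline{\GDel}$ and $\del_\nu W=0$ on $\GNeu$, while on the small arc $\GDel$---where $\SteM$ carries the Neumann condition but $\SteSte$ still carries the Steklov condition---it satisfies the single inhomogeneous condition $\del_\nu W=-\lam\,\SteSte(\xS,\cdot)$. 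By hypothesis $\lam$ is an eigenvalue of neither partition, so both $\SteSte$ and $\SteM$ exist by Theorem \ref{thm:decomp S}; moreover, since the perturbed eigenvalues converge to the unperturbed ones (Proposition \ref{prop:ueps conv to u0}), for small $\eps$ the value $\lam$ stays a fixed distance from the perturbed spectrum, so the decomposition of $\SteM$ in Theorem \ref{thm:decomp S} is uniform in $\eps$ and its trace is bounded on $\GDel$ independently of $\eps$.

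First I would derive the representation formula. Applying Green's second identity to $W$ and to the \emph{unperturbed} kernel $y\mapsto\SteSte(y_0,y)$ (with source at the evaluation point $y_0$, puncturing its singularity in the usual way), the two Steklov terms on $\GSte\setminus\overline{\GDel}$ cancel against each other, the Neumann terms on $\GNeu$ vanish, and on $\GDel$ the kernel still obeys the Steklov condition $\del_\nu\SteSte(y_0,\cdot)=\lam\,\SteSte(y_0,\cdot)$, so that after inserting $\del_\nu W=-\lam\,\SteSte(\xS,\cdot)$ and $\SteSte(\xS,\cdot)+W=\SteM(\xS,\cdot)$ the surviving contribution collapses to
\begin{align*}
  W(y_0)=\lam\int_{\GDel}\SteSte(y_0,y)\,\SteM(\xS,y)\,\intd\sigma(y)\,,\qquad y_0\in\Om\,.
\end{align*}
The advantage of pairing with the unperturbed kernel is that $\SteSte(y_0,\cdot)$ has no Steklov--Neumann junctions near $c_\star$ and is therefore controlled uniformly in $\eps$, while the only perturbed object remaining, $\SteM(\xS,\cdot)$, enters solely through a crude a priori bound.

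Next I would prove the boundary estimate. For $y_0\in\GDel$ I substitute $\SteM(\xS,\cdot)=\SteSte(\xS,\cdot)+W$ into the identity. By Theorem \ref{thm:decomp S} the kernel splits as $\SteSte(y_0,y)=\tfrac{1}{2\pi}\log|y_0-y|+(\text{smooth, bounded near } c_\star)$, and since $\GDel$ has length $2\eps$ one has $\int_{\GDel}|\SteSte(y_0,y)|\,\intd\sigma(y)=\OO(\eps\log\tfrac1\eps)$ uniformly in $y_0\in\GDel$. As $\SteSte(\xS,\cdot)$ is bounded on $\GDel$, the explicit part contributes $\OO(\eps\log\tfrac1\eps)$, while the remaining term is bounded by $\lam\,\NORM{W}_{\Leu^\infty(\GDel)}\,\OO(\eps\log\tfrac1\eps)$. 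Taking the supremum over $y_0\in\GDel$ and absorbing the (for $\eps$ small) self-referential term on the left yields $\NORM{W}_{\Leu^\infty(\GDel)}=\OO(\eps\log\tfrac1\eps)$, which gives the claimed $\SteM(\xS,y)=\SteSte(\xS,y)+\OO(\eps)$ for $y\in\GDel$.

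Finally, for the interior formula I reuse the identity with $y_0\in\Om$ fixed, where now $\SteSte(y_0,\cdot)$ is smooth across $\GDel$. Writing again $\SteM(\xS,\cdot)=\SteSte(\xS,\cdot)+W$, the leading contribution is $\lam\int_{\GDel}\SteSte(y_0,y)\SteSte(\xS,y)\,\intd\sigma(y)$; expanding both smooth factors about the midpoint $c_\star$ and using that $\GDel$ is symmetric about $c_\star$ (so the linear term integrates to zero) produces $2\,\lam\,\eps\,\SteSte(y_0,c_\star)\SteSte(\xS,c_\star)+\OO(\eps^2)$, the factor $2\eps$ being the length of $\GDel$. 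The leftover term $\lam\int_{\GDel}\SteSte(y_0,y)\,W\,\intd\sigma$ is $\OO(\eps)\cdot\OO(\eps)=\OO(\eps^2)$ by the boundary estimate just proved. Invoking the symmetry $\SteSte(y_0,c_\star)=\SteSte(c_\star,y_0)$ of the Green's function and renaming $y_0=y$ yields the stated expansion. I expect the main obstacle to be the uniform-in-$\eps$ control---guaranteeing that the trace of $\SteM$ stays bounded on the shrinking arc $\GDel$ (handled via Proposition \ref{prop:ueps conv to u0} and Theorem \ref{thm:decomp S}) and tracking the logarithmic factor in the single-layer-type integral, which is precisely what forces the honest bounds $\OO(\eps\log\tfrac1\eps)$ and $\OO(\eps^2\log\tfrac1\eps)$ to be absorbed into the stated $\OO(\eps)$ and $\OO(\eps^2)$.
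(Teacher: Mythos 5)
Your core identity is exactly the paper's: applying Green's second identity to the difference $W\DEF\SteM(\xS,\cdot)-\SteSte(\xS,\cdot)$ against the unperturbed kernel gives $W(y_0)=\lam\int_{\GDel}\SteSte(y_0,y)\,\SteM(\xS,y)\,\intd\sigma_y$, which is precisely the paper's representation (\ref{equ:v=int()}), and your interior step (Taylor expansion of the two smooth factors about $c_\star$, symmetry of $\GDel$ killing the linear term, $|\GDel|=2\eps$ producing the factor $2\eps$) is also the paper's. The genuine gap is in your boundary estimate and in your closing sentence. Estimating the trace of the single-layer-type integral directly, as you do, honestly yields $\NORM{W}_{\Leu^\infty(\GDel)}=\OO(\eps\log\tfrac1\eps)$ and hence an interior error $\lam\int_{\GDel}\SteSte(y_0,y)\,W(y)\,\intd\sigma_y=\OO(\eps^2\log\tfrac1\eps)$, and these bounds cannot be ``absorbed into the stated $\OO(\eps)$ and $\OO(\eps^2)$'' as you assert: $\eps\log\tfrac1\eps/\eps\to\infty$, so $\OO(\eps\log\tfrac1\eps)$ is strictly weaker than $\OO(\eps)$. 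As written, your argument proves the theorem only with logarithmically degraded error terms, not the statement as given. Iterating your second-kind equation does not help either: each pass through the kernel $\SteSte(y_0,y)$ with both points on $\GDel$ regenerates the factor $\int_{\GDel}\abs{\log|y_0-y|}\,\intd\sigma_y\sim\eps\log\tfrac1\eps$.

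The paper avoids the log by a different boundary step that you should adopt: instead of taking the trace of the representation on $\GDel$, take its \emph{normal derivative} there and use the exact Neumann condition $\del_\nu\SteM(\xS,\cdot)=0$ on $\GDel$. Writing $\SteSte(y,\cdot)=2\Gamma^0(y,\cdot)+\mathrm{R}^\lam_\GSte(y,\cdot)$ for $y\in\GDel$ (the factor $2$ because the boundary source halves the Dirac measure) and using the flat-segment jump relation $\lim_{h\searrow 0}\int_{-\eps}^{\eps}\tfrac{h}{h^2+(\tau-t)^2}\,\mu(t)\,\intd t=\pi\,\mu(\tau)$, the singular part of the kernel reproduces $-\lam\,\SteM(\xS,y)$ \emph{exactly} rather than being estimated crudely, leaving the second-kind relation $\SteM(\xS,y)=\SteSte(\xS,y)+(\del_\nu\mathrm{R}^\lam_\GSte(y,\cdot),\SteM(\xS,\cdot))_{\GDel}$, whose remainder is an integral over a set of length $2\eps$ against a non-singular kernel and is therefore $\OO(\eps)$ with no logarithm. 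Feeding this sharp boundary bound back into your interior formula then gives the stated $\OO(\eps^2)$. In short: your structure is the paper's, but the cancellation carried by the Neumann condition on $\GDel$ and the single-layer jump relation is what removes the log, and your proposal replaces exactly that step with an estimate too lossy to reach the claimed orders.
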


Let us consider the structure of our proof. Using Green's identity we establish the identity
\begin{align*}
	-\lam\,\SteSte(\xS\,, y)
		= \lam \del_{\nu_y}
			(\SteSte({y},\cdot)\,,\SteM(\xS\,, \cdot))_\GDel\,,
\end{align*}
for $y\in\GDel$, where $(\cdot\,,\cdot)_{B}$ denotes the $\Leu^2(B)$ inner product over an integrable domain $B$. Using the decomposition established in Theorem \ref{thm:decomp S}, we can recover a formula for $\SteM(\xS\,, y)$ for $y\in\GDel$. Then we can trace back, that is using Green's identity one more time, to find the equation in Theorem \ref{thm:FundaSteklovAsympt} for $y\in\Om$.

\begin{proof}
	Without loss of generality, we assume that $\GDel$ is a straight line segment on $\del\Om$. By rotating and translating $\Om$, we assume that $\GDel=(-\eps,\eps) \times \{0\}$ and that the outside normal on $\GDel$ is $(0,1)^\TransT$. We remind here that this is possible since we assume that $\del\Om$ is smooth.
	
	With Green's second identity we obtain for $y\in\Om$
	\begin{align*}
		\SteM(\xS\,, y) 
			&=  ( \Laplace_z \SteSte(y\,,\cdot)\,, \SteM(\xS\,,\cdot))_\Om\\
			&=	\SteSte(y\,,\xS)+( \del_{\nu_z}\SteSte(y\,,\cdot)\,,\SteM(\xS\,,\cdot))_\GDel\,.
	\end{align*}
	We readily see that $\SteSte(\xS\,,y)=\SteSte(y\,,\xS)$, using the same argument as the one in the last equation. We define $v_{\xS}(y)\DEF \SteM(\xS\,, y)-\SteSte(\xS\,, y)$. This leads us to
	\begin{align}
		\left\{ 
			\begin{aligned}
		 		\del_{\nu_y}   v_{\xS}(y)	
		 			&= -\lam\,\SteSte(\xS\,,y) \quad && \text{for}\;y\in\GDel\,, \\
		 		v_{\xS}(y)      	
		 			&= (\del_{\nu_z}\SteSte(y\,,\cdot)\,,\SteM(\xS\,,\cdot))_\GDel	 \quad && \text{for}\;y\in\Om\,. \\
			\end{aligned}
		\right. \label{equ:v=int()}
	\end{align}
	Combining both statements, we have for $y\in\GDel$
	\begin{align*}
		-\lam\,\SteSte(\xS\,, y)
			= \nu_y\cdot\lim_{\substack{\hat{y}\rightarrow y\\\hat{y}\in\Om}}
			\nabla_{\!\hat{y}}
				\,\lam( \SteSte(\hat{y},\cdot)\,,\SteM(\xS\,, \cdot))_\GDel\,.
	\end{align*}
	Next we use the decomposition $\SteSte(y,z)=2\Gamma^0(y,z)+\mathrm{R}^\lam_\GSte(y,z)$, where $\Gamma^0(y,z)=\frac{1}{2\pi}\log(|y-z|)$ denotes the fundamental solution to the Laplace equation $\Laplace \Gamma^0(y\,,\cdot)=\delta_y(\cdot)$, and $\mathrm{R}^\lam_\GSte(y,\cdot)\subset\mathcal{C}^\infty({\Om})\cap\mathrm{H}^1(\Om)$ denotes the remaining function from Theorem \ref{thm:decomp S}. The factor 2 emerges due to the fact that $z\in\GDel$ is fixed and we consider $y\rightarrow \GDel$, this means that the singularity is at the boundary which halves the Dirac measure. 
	Hence using that $\GDel$ is flat and located at the horizontal axis, centered at the origin, we can rewrite the right-hand side of the last equation as
	\begin{multline}
		\lam\,\nu_y\cdot\lim_{\substack{\hat{y}\rightarrow y\\\hat{y}\in\Om}}
			\nabla_{\!\hat{y}}\left(
				( 2\Gamma^0(\hat{y},\cdot)\,,\SteM(\xS\,, \cdot))_\GDel
				+( \mathrm{R}^\lam_\GSte(\hat{y},\cdot)\,,\SteM(\xS\,, \cdot))_\GDel\,
			\right)	=\\
		\frac{\lam}{2\pi}\lim_{h\searrow 0}
				\int_{\GDel} \frac{-2h}{h^2+(z_1-y_1)^2}\,\SteM(\xS\,, z)\intd \sigma_z
				+\lam ( \del_{\nu_y}\mathrm{R}^\lam_\GSte(y,\cdot)\,,\SteM(\xS\,, \cdot))_\GDel\,.
	\end{multline}
	We can pull the normal derivative inside the term $(\mathrm{R}_\GSte^{\lam}\,, \SteM(\xS\,, z))_{\GDel}$, which follows readily using Green's identity and then the dominated convergence theorem. 
	Using that $\lim_{h\searrow 0}\int_{-\eps}^\eps \frac{h}{h^2+(\tau-t)^2}\,\mu(t)\intd t=\pi \mu(\tau)$, we infer
	\begin{align*}
		\SteM(\xS\,,y)
		=
			\SteSte(\xS\,, y)
			+\OO(\eps)\,.
	\end{align*}	

	Using (\ref{equ:v=int()}) we obtain for $y\in\Om$ that
	\begin{align*}
		\SteM(\xS\,,y) = \SteSte(\xS\,, y) +\lam\big(\SteSte(\xS\,, \cdot)\,,\SteSte(y\,, \cdot)\big)_\GDel+\OO(\eps^2)\,,
	\end{align*}
	from which Theorem \ref{thm:FundaSteklovAsympt} follows.
\end{proof}
\makeatletter
\newenvironment{breakablealgorithm}
  {
   \begin{center}
     \refstepcounter{algorithm}
     \hrule height.8pt depth0pt \kern2pt
     \renewcommand{\caption}[2][\relax]{
       {\raggedright\textbf{\ALG@name~\thealgorithm} ##2\par}%
       \ifx\relax##1\relax 
         \addcontentsline{loa}{algorithm}{\protect\numberline{\thealgorithm}##2}%
       \else 
         \addcontentsline{loa}{algorithm}{\protect\numberline{\thealgorithm}##1}%
       \fi
       \kern2pt\hrule\kern2pt
     }
  }{
     \kern2pt\hrule\relax
   \end{center}
  }
\makeatother

\section{The Algorithm}\label{Ch:Algorithm}

We now present an algorithm to which, by insertion of Neumann boundary pieces,  maximizes the magnitude of the Green's function corresponding to a mixed Steklov-Neumann problem, for a Steklov parameter close to a given target.

Our algorithm starts with a full Steklov boundary condition. Then it changes a small boundary interval into a Neumann boundary condition such that the value of the Steklov-Neumann function $\mathrm{S}^\lam(\xS, y)$ increases, where $\xS$ is the signal transmitting point in the domain $\Om$ and $y$ is the receiving point by using the asymptotic formulae in Theorems \ref{thm:EVAL-Asymp} and \ref{thm:FundaSteklovAsympt}. According to these two theorems, by changing a boundary part from the Steklov boundary condition to the Neumann one, the associated eigenvalue $\lambda_j^{\GDir}$ increases. Thus the idea is to expand the Neumann boundary in such extent that we eventually hit the desired eigenvalue $\lam_\star$. This is not possible if the desired eigenvalue $\lam_\star$ is smaller than the first non-zero Steklov eigenvalue $\lam_2^{\del\Om}$, since the eigenvalue $0$ cannot increase and all other eigenvalues cannot decrease. {However, from our numerical experiments,  it seems to hold true for all other positive $\lam_\star$.} 

\begin{breakablealgorithm}
\caption{Finding an intensity maximizing partition of the boundary}\label{Algorithm}
\hspace*{\algorithmicindent} \textbf{Input:} $\xS\in\Om$, $y\in\Om$, $y\neq \xS$, $\lam_\star>0$, $C_{\mathrm{tol}}>0$. \\
\textbf{Require:} $\lam_\star\geq\lam_2^{\del\Om}$ and $C_{\mathrm{tol}}$ is big enough. 
\begin{algorithmic}[1] 
\State Find the next lower Steklov eigenvalue $\lam$ to $\lam_\star$.
\State Compute the value $\mathrm{S}^\lam_{\del\Om}(\xS,y)$ and the Steklov functions $\mathrm{S}^\lam_{\del\Om}(\xS,\cdot)$, $\mathrm{S}^\lam_{\del\Om}(y,\cdot)$ associated to the partition $(\del\Om, \varnothing)$ at the boundary.

\If{$\mathrm{S}^\lam_{\del\Om}(\xS,y)\geq 0$}
	\State Let $\mathsf{L}\in{\del\Om}$ be the location of a global maxima of the function $\del\Om\ni z\mapsto\big(\mathrm{S}^\lam_{\del\Om}(\xS,z) \cdot \mathrm{S}^\lam_{\del\Om}(y,z)\big)\in\RR$. 
\Else
	\State Let $\mathsf{L}\in{\del\Om}$ be the location of a global minima of the function $\del\Om\ni z\mapsto\big(\mathrm{S}^\lam_{\del\Om}(\xS,z) \cdot \mathrm{S}^\lam_{\del\Om}(y,z)\big)\in\RR$. 
\EndIf
\State Set $\eps, \eps_\Delta\DEF 0$, $f\DEF 1$.
\State Set $\GDel, \GDel^0\DEF\{\mathsf{L}\}$, $\lam_0\DEF\lam$.
\While{$\textrm{True}$}
	\State Set $\{u_i\}_{i=0}^{N-1}$ to be orthonormalized eigenfunctions to $\lam_0$ and $(\del\Om\setminus{\GDel^0},\;\GDel^0)$.
	\State Set $\epsilon_\Delta = f\cdot\frac{1}{2}\frac{\lam_\star-\lam_0}{\lam_0\sum_{i=0}^{N-1}u_{i}(\mathsf{L})^2}$.
	\State Extend $\GDel$ on both sides by the length $\eps_\Delta$.
	\State Compute the new eigenvalue $\lam$ to the partition $(\del\Om\setminus{\GDel},\;\GDel)$.
	\If{$|\lam-\lam_\star|\leq C_{\mathrm{tol}}$}
		\Return $\GDel$ 
	\ElsIf{$\lam<\lam_\star-C_{\mathrm{tol}}$}
		\State Set $\GDel^0=\GDel$, $\eps_0=\eps_0+\eps_\Delta$, $\lam_0=\lam$.
		\State Set $f=1$.
	\Else
		\State Set $\GDel=\GDel^0$, $\lam=\lam_0$.
		\State Set $f=f\cdot 0.8$.
	\EndIf
\EndWhile
\end{algorithmic} 
\end{breakablealgorithm}

In the following we give an explanation for the above choices.
\begin{itemize}
	\item[Line 2-7:] To increase the value $\mathrm{S}^\lam_{\del\Om}(\xS,y)$ we use in the asymptotic formula given in Theorem \ref{thm:FundaSteklovAsympt}. Depending on the sign of $\mathrm{S}^\lam_{\del\Om}(\xS,y)$, we have to search for the maxima of the function $z\mapsto\big(\mathrm{S}^\lam_{\del\Om}(\xS,z) \cdot \mathrm{S}^\lam_{\del\Om}(y,z)\big)$ or its minima;
	\item[Line 10:] In this \textit{while}-loop, we change a boundary interval with center $\mathsf{L}$ and length $2\eps$ into a Neumann Boundary condition. Here, $\eps$ is determined by Theorem \ref{thm:EVAL-Asymp}, that is, $\lam_\star \approx \lam + 2\eps\lam\sum_{i=0}^{N-1} u_i(\mathsf{L})^2$. The factor $f$ dampens the choice of $\eps_\Delta$ and is chosen to be equal to $1$ at the beginning of the algorithm. If the new eigenvalue $\lam$ satisfies $|\lam-\lam_\star|<C_{\mathrm{tol}}$, then we end the loop; if $\lam<\lam_\star-C_{\mathrm{tol}}$, then we restart the loop with the new values, and in the remaining case we decrease $f$. We choose here $f=f\cdot 0.8$ for simplicity, but there are several different more elaborate approaches, for instance $f=f \cdot \frac{\lam_\star-\lam_0}{\lam-\lam_0}$, or utilizing the knowledge that the approximation-error of $\lam$ is of order $\eps^2$. 
	\item[Line 11:] We do not need the eigenfunctions on the whole domain. In fact, we only need those on the Steklov part of the boundary. Actually, we only need their evaluation on $\mathsf{L}$, but in order to determine the eigenfunctions we also need to norm them and to orthogonalize them in the $\Leu^2(\GSte)$ sense. This is done with the Gram-Schmidt rule together with the trapezoidal rule for discretized eigenfunctions.
	\item[Line 12:] The idea follows from the approximation $\lam_\star \approx \lam + 2\lam \eps\sum_{i=0}^{N-1} u_i(\mathsf{L})^2$. The dampening factor $f$ is discussed in Line $10$.
\end{itemize}

\begin{remark}
	When the function $\del\Om\ni z\mapsto\big(\mathrm{S}^\lam_{\del\Om}(\xS,z) \cdot \mathrm{S}^\lam_{\del\Om}(y,z)\big)\in\RR$ oscillates strongly on the boundary it might yield better results, when multiple, but smaller, boundary intervals are applied. The idea behind this is that using one long boundary interval might intersect the disadvantageous part of the function $\mathrm{S}^\lam_{\del\Om}(\xS,z) \cdot \mathrm{S}^\lam_{\del\Om}(y,z)$ and thus decrease the intensity of $\mathrm{S}^\lam_{\del\Om}(\xS,y)$. This methodology is not investigated in this paper.
\end{remark}
\newcommand{\raystretch}[1]{\renewcommand{\arraystretch}{#1}}

\section{Numerical Implementation and Tests}\label{Ch:NumImplTest}

The numerical implementation for the eigenvalues follows the boundary-integral approach given for the mixed Steklov-Neumann problem in \cite{nurbek,EldarsAlgo} and for the mixed Dirichlet-Neumann problem in \cite{Nigam2014} and is as follows. 

We represent an eigenfunction $u_j\in\mathrm{H}^1(\Om)$ using the single-layer potential, that is,
\begin{align*}
	u_j(x)=\mathcal{S}^0_{\Om}[\phi_j](x)\DEF\int_{\del\Om}\Gamma^0(x\,,y)\phi_j(y)\intd\sigma_y\,,\quad\text{ for } x\in\Om\,,
\end{align*}
where $\Gamma^0(x\,, y)\DEF\frac{1}{2\pi}\log(|x-y|)$ and $\phi_j\in\Leu^2(\del\Om)$ is the potential associated to $u_j$. Using the jump-relations given in \cite[Section 2.2]{LPTSA}, we can obtain that for $x\in\del\Om$, 
\begin{align*}
	u_j(x)
		&=\int_{\del\Om}\Gamma^0(x\,,y)\phi_j(y)\intd\sigma_y\,,\\
	\del_\nu u_j(x)
		&=(-\tfrac{1}{2}\mathcal{I}+(\mathcal{K}^0_{\del\Om})^\ast)[\phi_j](x)\,,\\
		&=-\tfrac{1}{2}\phi_j(x) + \int_{\del\Om}\del_{\nu_x}\Gamma^0(x\,,y)\phi_j(y)\intd\sigma_y.
\end{align*}
We first assume that the domain $\Om$ possesses a $2\pi$-periodic counterclockwise parametric representation of the form $x(t)=(x_1(t)\,, x_2(t))$, for $t\in [0,2\pi]$ and then discretize the interval $[0,2\pi]$, according to the partitioning $(\GSte\,,\GNeu)$ of the boundary. Finally, we use a spectral method based on logarithmic singularity resolution and Fourier series, as discussed in \cite{Nigam2014}. 
To be more specific, we first use an affine linear transformation of the cosine substitution in the aforementioned integrals and then make  use of a Fourier transformation of the resulting integrand. This leads to well-known integrals with analytic expressions. In order to compute the Fourier transformation, a discretization of the underlying interval is used. Back-substitution recovers the initial discretization. This leads us to the discretized operator expressions $\texttt{S}$ and $-\frac{1}{2}\texttt{I} + \texttt{K}^\ast$. Using the underlying boundary conditions given in (\ref{pde:SteklovHom}), we obtain 
\begin{align*}
	(-\tfrac{1}{2}\texttt{I} + \texttt{K}^\ast) \texttt{X} 
		= \text{\footnotesize{$\lambda$}} \, 
		\begin{bmatrix} 
			\texttt{0} \\ \texttt{S}
		\end{bmatrix}
		\texttt{X},
\end{align*}
where $\texttt{X}$ is the discretized expression of the potential $\phi_j$. This equation is solved with the in-build method \verb+eig( , )+ in MATLAB. After obtaining \verb+X+ , we can reconstruct the eigenfunction $u_j$ from the single-layer potential. In order to get the Steklov-Neumann function $\mathrm{S}^\lam$,  we use the decomposition $\mathrm{S}^\lam=\Gamma^0+\mathrm{R}^\lam$, where we determine $\mathrm{R}^\lam$ using
\begin{align*}
	(-\tfrac{1}{2}\texttt{I} + \texttt{K}^\ast) \texttt{R} 
		- \text{\footnotesize{$\lambda$}} \, 
		\begin{bmatrix} 
			\texttt{0} \\ \texttt{S}
		\end{bmatrix}
		\texttt{R}
		=	\begin{bmatrix} 
				-\del\Gamma^0 \\ -(\del\Gamma^0-\lambda\Gamma^0)
			\end{bmatrix}\,,
\end{align*}
where $\texttt{R}$ describes the discretized form of $\mathrm{R}^\lam$, which is analogous to $\texttt{X}$. This is done with the in-build method \verb+mldivide+ in MATLAB, also known as the "\textbackslash "- operator.

Our first numerical test shows the algorithm in the best case scenario. We have the domain $\Omega=\{x\in\RR^2 \mid \NORM{x}_{\RR^2}<1\}$, the source point $\xS \in \{(0,0)^\mathrm{T},(-0.9,0)^\mathrm{T}\}$, the target eigenvalue $\lam_\star=2.5$ and $C_\mathrm{tol}=10^{-3}$. We remark here that the next lower Steklov eigenvalue is a double one at $2$. We let the receiving point $y\in\{ (0,r)^\mathrm{T}\in\RR^2 \mid r > 0\}$ vary. Here we mention that our implementation yields a minuscule imaginary part, because of the MATLAB implementation of \verb+eig+ and the "\textbackslash "- operator, and to avoid amplification of the error during optimization we always projected to the real part. The number of discretization points is $3\cdot256$.
	As it turns out, for $\xS=(0,0)^\mathrm{T}$, the Steklov-Neumann function is independent of $\lam$ and the partitioning because $\Gamma^0$ is zero  on the whole domain. For $\xS=(-0.9,0)^\mathrm{T}$ the Steklov-Neumann function changes and the resulting values are shown in Table \ref{table:1}. An illustration of the Steklov-Neumann function is given in Figure \ref{fig:Circ1}.
	The same set-up but with target eigenvalue $\lam_\star=15.5$ and $3\cdot 512$ discretization points is shown in Table \ref{table:2}.
	
	Our second numerical test shows the algorithm on a non-convex kite-shaped domain $\Om$. This domain is given by the boundary parametrization
\begin{align*}
	\begin{bmatrix}
		\cos(t)+0.65\cos(2 t)-0.65 \\
     	1.5\sin(t)
	\end{bmatrix}\,,
\end{align*} 
	for $t\in [0,2\pi]$. The source point is $\xS = (-1.25,1.25)^\mathrm{T}$, the receiving point is $y = (-1.25,-1.25)^\mathrm{T}$, the target eigenvalue is $\lam_\star=2.5$ and $C_\mathrm{tol}=10^{-3}$. The next lower Steklov eigenvalue is a single one at approximately $2.043996$. The number of discretization points is $6\cdot256$. The resulting Steklov-Neumann function is shown in Figure \ref{fig:Kite1}.
	The center of the Neumann boundary condition $\GNeu$ is at $\sim (0.257,-0.947)^\mathrm{T}$ with length $\sim 1.489$. $\mathrm{S}_{\text{Steklov}}^\lam(\xS,y)\approx 0.0199$ and $\mathrm{S}_{\text{End}}^\lam(\xS,y)\approx -124.1$.

\begin{figure}[h]
  \begin{subfigure}{0.49\textwidth}
    \centering
    \includegraphics[width=0.99\textwidth]{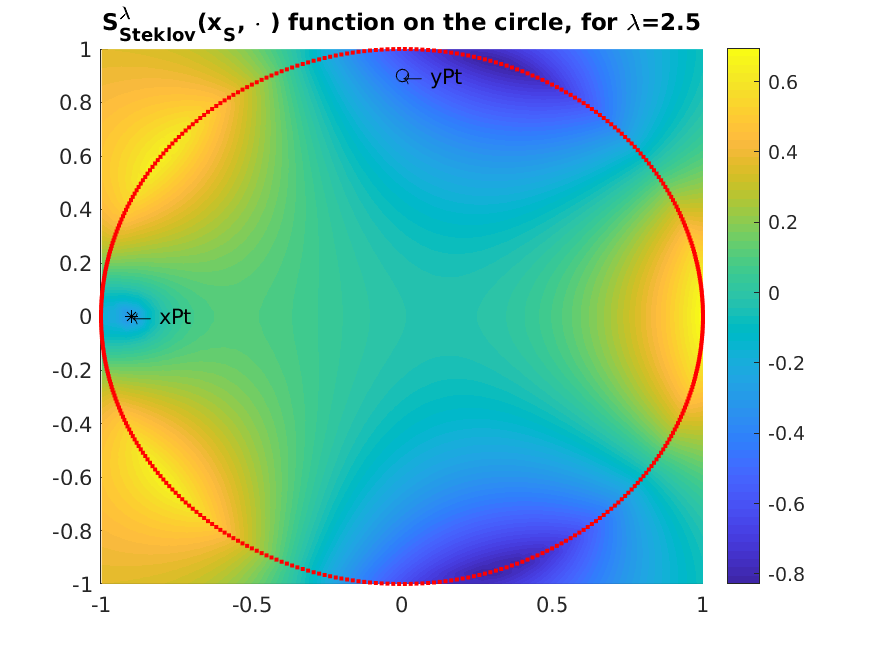}
  \end{subfigure}\hfill 
  \begin{subfigure}{0.49\textwidth} 
    \centering
    \includegraphics[width=0.99\textwidth]{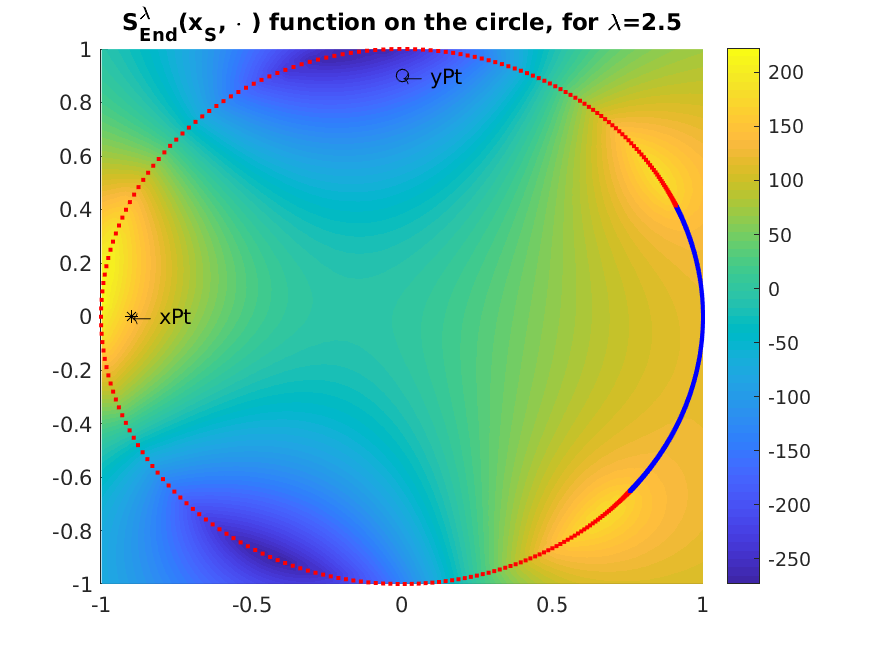}
  \end{subfigure}
  \caption{The Steklov-Neumann function for $\lam_\star=2.5$ on the unit disk with Steklov boundary condition on the left and final mixed boundary conditions on the right. Marked are $\xS$, denoted as 'xPt', and $y$, denoted as 'yPt'. The points on the boundary are our discretization points. Blue points denote the Neumann boundary conditions, red points denote the Steklov boundary conditions.}\label{fig:Circ1}
\end{figure}

\begin{table*}\centering
\raystretch{1.1}
\resizebox{\columnwidth}{!}{%
\begin{tabular}{@{}rrrrrr@{}}
	& $r=0.1$ & $r=0.25$ & $r=0.5$ & $r=0.75$ & $r=0.9$\\ 
	\midrule
$\mathrm{S}_{\text{Steklov}}^{\lam_\star}(\xS,y)$ 
	& -0.022
	& -0.048
	& -0.147 
	& -0.332 
	& -0.492 \\
$\mathrm{S}_{\text{End}}^{\lam_\star}(\xS,y)$ 
	& 12.67
	& 39.83
	& 90.50
	& -133.6
	& -200.4 \\
$\left|\frac{\mathrm{S}_{\text{End}}^{\lam_\star}(\xS,y)}{\mathrm{S}_{\text{Steklov}}^{\lam_\star}(\xS,y)}\right|$ 
	& 586
	& 838
	& 615
	& 402
	& 407 \\
$\theta_{\mathrm{center}}$ 
	& $0.42 \pi$
	& $0.42 \pi$
	& $0.39 \pi$
	& $1.96 \pi$
	& $1.96 \pi$ \\
$l_{\mathrm{N}}$ 
	& $0.36 \pi$
	& $0.36 \pi$
	& $0.36 \pi$
	& $0.36 \pi$
	& $0.36 \pi$ \\
\bottomrule
\end{tabular}
}
\caption{Algorithm \ref{Algorithm} tested on the unit circle with $\lam_\star = 2.5$, $\xS = (-0.9,0)^\mathrm{T}$, $y\in\{ (0,r)^\mathrm{T}\in\RR^2 \mid r > 0\}$, $C_\mathrm{tol}=10^{-3}$. $\mathrm{S}_{\mathrm{Steklov}}^\lam(\xS,y)$ represents the Steklov function and $\mathrm{S}_{\mathrm{End}}^\lam(\xS,y)$ represents the Steklov-Neumann function on the final partition, where the final partition is made out of two boundary intervals, one with Steklov boundary conditions and the other with Neumann boundary conditions. $\theta_{\mathrm{center}}\in [0, 2 \pi)$ represents the angle of the center of the Neumann boundary intervals and $l_{\mathrm{N}}$ its length.}\label{table:1}
\end{table*}

Interpreting the test data, we infer that adding Neumann boundary intervals moves the peaks at the boundary and can increase the overall value of the Steklov-Neumann function as seen in Figure \ref{fig:Circ1}. Consider that on the Neumann boundary interval itself, there are no significant peaks and the peaks which were on that part of the boundary before, get pressed into the remaining part of the boundary. Hence, it seems that the algorithm does not only close in to the target eigenvalue $\lam_\star$, but also moves one of the peaks close to the receiving point. This is feasible as long as the receiving point is close enough to the boundary, since no peaks can be found in the interior of the domain.

\begin{table*}\centering
\raystretch{1.1}
\resizebox{\columnwidth}{!}{%
\begin{tabular}{@{}rrrrrr@{}}
	& $r=0.1$ & $r=0.25$ & $r=0.5$ & $r=0.75$ & $r=0.9$\\ 
	\midrule
$\mathrm{S}_{\text{Steklov}}^{\lam_\star}(\xS,y)$ 
	& -0.166
	& -0.016
	& -0.014 
	& -0.013 
	& -0.038 \\
$\mathrm{S}_{\text{End}}^{\lam_\star}(\xS,y)$ 
	& -0.540
	& -1.611
	& -5.811
	& -9.719
	& -38.04 \\
$\left|\frac{\mathrm{S}_{\text{End}}^{\lam_\star}(\xS,y)}{\mathrm{S}_{\text{Steklov}}^{\lam_\star}(\xS,y)}\right|$ 
	& 32.5
	& 101
	& 418
	& 731
	& 1014 \\
$\theta_{\mathrm{center}}$ 
	& $0.52 \pi$
	& $0.52 \pi$
	& $0.52 \pi$
	& $0.52 \pi$
	& $0.98 \pi$ \\
$l_{\mathrm{N}}$ 
	& $0.05 \pi$
	& $0.05 \pi$
	& $0.05 \pi$
	& $0.05 \pi$
	& $0.05 \pi$ \\
\bottomrule
\end{tabular}
}
\caption{Algorithm \ref{Algorithm} tested on the unit circle with $\lam_\star = 15.5$, $\xS = (-0.9,0)^\mathrm{T}$, $y\in\{ (0,r)^\mathrm{T}\in\RR^2 \mid r > 0\}$ and $C_\mathrm{tol}=10^{-3}$. $\mathrm{S}_{\mathrm{Steklov}}^\lam(\xS,y)$, $\mathrm{S}_{\mathrm{End}}^\lam(\xS,y)$, $\theta_{\mathrm{center}}\in [0, 2 \pi)$ and $l_{\mathrm{N}}$ are defined as in Table \ref{table:1}.}\label{table:2}
\end{table*}

\begin{figure}[h]
  \begin{subfigure}{0.49\textwidth}
    \centering
    \includegraphics[width=0.99\textwidth]{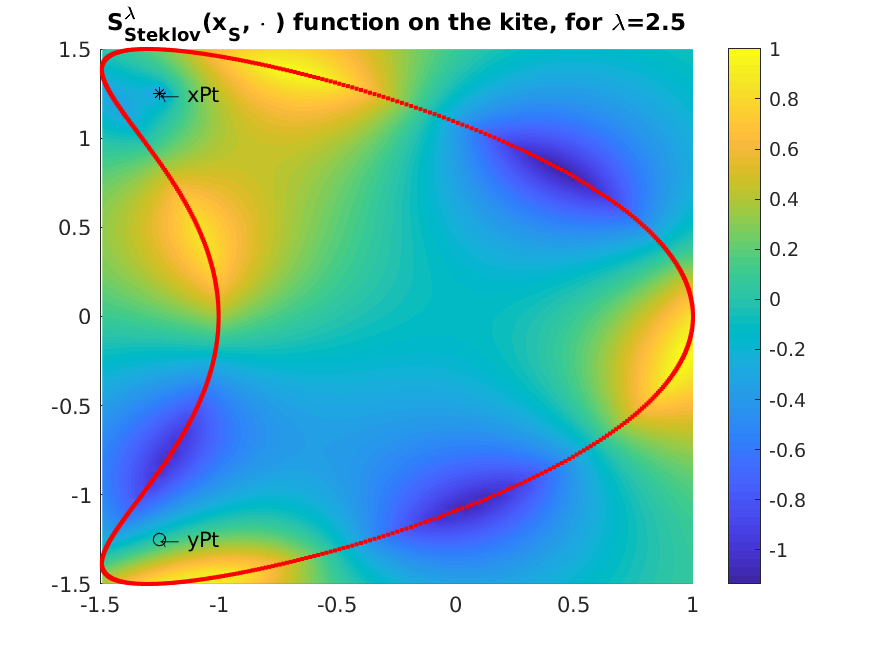}
  \end{subfigure}\hfill 
  \begin{subfigure}{0.49\textwidth} 
    \centering
    \includegraphics[width=0.99\textwidth]{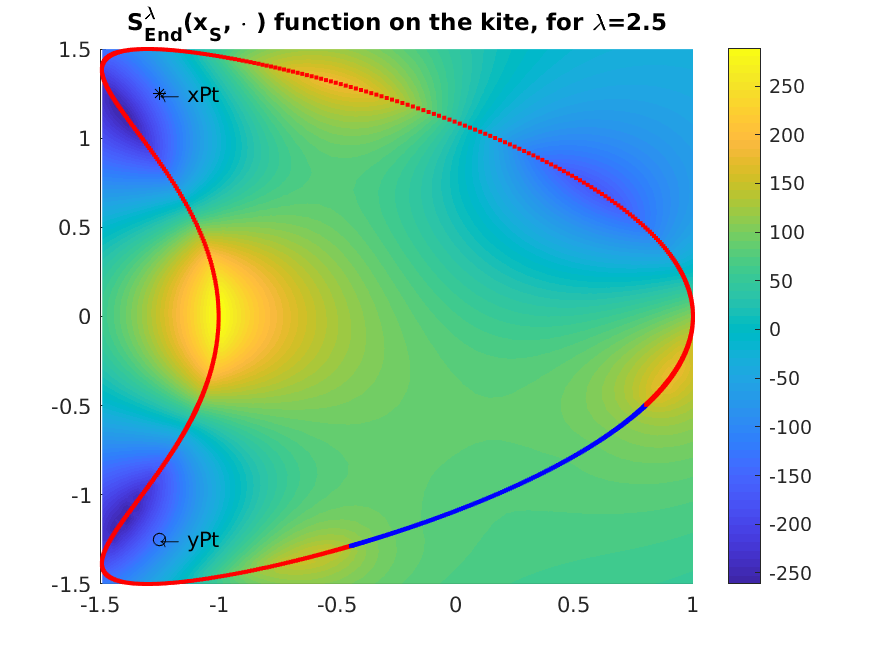}
  \end{subfigure}
  \caption{The Steklov-Neumann function for $\lam_\star=2.5$ on the kite shape with Steklov boundary condition on the left and final mixed boundary conditions on the right. Further notation is the same as in Figure \ref{fig:Circ1}. }\label{fig:Kite1}
\end{figure}

\section{Concluding Remarks}
We have presented asymptotic expressions for the change in the Steklov-Neumann eigenpairs when a small portion of the boundary is changed from Steklov to Neumann. These ideas are exploited to derive a method for optimizing the Green's function for a mixed Steklov-Neumann problem at a given source-receiver pair, and for a value of the Steklov parameter which is close to a given target value. The optimization requires the highly accurate evaluation of such mixed eigenvalues, and this step is performed using a boundary integral approach. The optimization strategy is effective, and can lead to gains of two orders of magnitude.



\bibliographystyle{plain}
\bibliography{refs}

\end{document}